\newtheorem{theorem}{Theorem}
\newtheorem{lemma}[theorem]{Lemma}
\newtheorem{claim}[theorem]{Claim}
\newtheorem{conjecture}[theorem]{Conjecture}
\theoremstyle{definition}
\newcommand{\oldqed}{}
\def\endofFact{\hfill\scalebox{.6}{$\blacksquare$}}
\newenvironment{claimproof}[1][Proof]{
  \renewcommand{\oldqed}{\qedsymbol}
  \renewcommand{\qedsymbol}{\endofFact}
  \begin{proof}[#1]
}{
  \end{proof}
  \renewcommand{\qedsymbol}{\oldqed}
}
\newcommand*{\lam}{\lambda}
\renewcommand{\epsilon}{\varepsilon}
\newcommand{\eps}{\varepsilon}
\renewcommand{\subset}{\subseteq}
\let\oldtriangle\triangle
\def\triangle{\mathbin{\oldtriangle}}
\newcommand*{\Ccut}{\mathcal{C}^{\mathrm{cut}}}
\newcommand*{\Cind}{\mathcal{C}^{\mathrm{ind}}}
\DeclareMathOperator{\Tr}{Tr}
\newcommand*{\transpose}{\mathrm{T}}
\newlist{enumerata}{enumerate}{1}
\setlist[enumerata]{label=\upshape{(\alph*)}}
\setlist[enumerate]{label=\upshape{(\roman*)}}
\title[Approximately counting independent sets in dense bipartite graphs]{Approximately counting independent sets in dense bipartite graphs via subspace enumeration}
\date{\today}
\author[C.\ Carlson]{Charlie Carlson}
\address{Department of Computer Science, University of Colorado Boulder, USA}
\email{chca0914@colorado.edu}
\author[E.\ Davies]{Ewan Davies}
\address{Department of Computer Science, Colorado State University, Fort Collins, USA}
\email{research@ewandavies.org}
\author[A.\ Kolla]{Alexandra Kolla}
\address{Department of Computer Science and Engineering, University of California Santa Cruz, USA}
\email{akolla@ucsc.edu}
\author[A.\ Potukuchi]{Aditya Potukuchi}
\address{Department of Electrical Engineering and Computer Science, York University, Toronto, Canada}
\email{apotu@yorku.ca}
\begin{document}
\begin{abstract}
We give a randomized algorithm that approximates the number of independent sets in a dense, regular bipartite graph---in the language of approximate counting, we give an FPRAS for \#BIS on the class of dense, regular bipartite graphs. 
Efficient counting algorithms typically apply to ``high-temperature'' problems on bounded-degree graphs, and our contribution is a notable exception as it applies to dense graphs in a low-temperature setting. 
Our methods give a counting-focused complement to the long line of work in combinatorial optimization showing that CSPs such as Max-Cut and Unique Games are easy on dense graphs via spectral arguments. The proof exploits the fact that dense, regular graphs exhibit a kind of small-set expansion (i.e.\ bounded threshold rank), which via subspace enumeration lets us enumerate small cuts efficiently.
\end{abstract}

\maketitle

\section{Introduction}

Exactly computing the number $i(G)$ of independent sets in a graph $G$ is \#P-hard, even when restricted to bipartite graphs~\cite{PB83}. In the general case, approximating $i(G)$ (to within, say, a constant factor) is NP-hard, even when restricted to $d$-regular graphs with $d\ge 6$~\cite{GGS+11,Sly10,SS12}.
Restricted to \emph{bipartite} graphs the problem of counting independent sets is known as \#BIS, and the prospect of hardness of approximation is less clear because finding a maximum independent set can be done in polynomial time.
Under polynomial-time approximation-preserving reductions, many natural counting problems are equivalent to \#BIS~\cite{DGGJ04}, and the complexity of approximating \#BIS has received a lot of attention. 
Existing approximation algorithms for \#BIS include ``high-temperature'' algorithms that work when degrees on one side of the bipartition are small~\cite{LL15}, ``low-temperature'' algorithms that require additional assumptions such as expansion~\cite{CGG+19,JKP19b} or unbalanced degrees~\cite{CPSODA}, and exponential-time algorithms that are nonetheless faster than algorithms for the general, non-bipartite case~\cite{GLR21}.
The description of these methods in terms of temperature is due to a common generalization in terms of weighted counting and strong connections to statistical physics, where counting (weighted) independent sets corresponds to computing the partition function of the hard-core model.

The idea that Max-CSP optimization problems such as Max-Cut and Unique Games should be easy to approximate on dense graphs---perhaps because they have good expansion properties---is well-established~\cite{AKK95,Fri96,FK96}.
Many of the techniques that apply to dense or expanding graphs have been generalized in interesting directions. 
In particular, spectral methods give good results in both dense graphs and expanders, and in many cases can be extended to more refined structural properties such as small-set expansion and threshold rank to great effect.
Most of the prominent approaches to Max-CSPs relevant to this work fall into three categories: algorithmic regularity lemmas which began with Frieze and Kannan~\cite{FK96} and were extended to threshold rank by Oveis Gharan and Trevisan~\cite{OT13a}; convex hierarchies and correlation rounding~\cite{AKK+08,BRS11a,GS11a}; and the spectral technique of subspace enumeration due to Kolla and Tulsiani~\cite{Kol10,KT07}. 
Prior to these developments were several algorithms demonstrating that counting problems on dense graphs admit efficient approximation algorithms~\cite{Ann94,DFJ94,JS89}, though these results do not apply to counting independent sets. 

An analogous theme in approximate counting is to obtain algorithms on expander graphs or random graphs~\cite{BGG+18,CDF+22a,GGS20,HJP23,JKP19b}.
Despite superficial similarity to the aforementioned work on Max-CSPs in the sense that these works give algorithms for dense or expanding instances, there is relatively little work establishing any common underlying phenomenon that makes Max-CSP problems and counting problems easy on dense or expanding graphs. 
A notable exception is due to Risteski~\cite{Ris16}, who connected the work on correlation rounding and convex hierarchies~\cite{BRS11a} to the broad and well-studied problem of approximating partition functions. His approach is also known as the variational method.
Regularity methods and correlation rounding do provide some evidence of structure common to these problems; for example, Coja--Oghlan and various coauthors have developed a range of regularity lemmas and applied them to both Max-CSPs and spin models on random graphs~\cite{BC16,CCF10,CP16}, and independently discovered correlation rounding in the context of Gibbs measures and partition functions~\cite{CP19a}. 
Counting independent sets is not typically one of the examples studied, though occasionally this is more for convenience than for fundamental reasons. 

In the specific context of \#BIS, connections to Max-CSP research are even more scarce.
The polymer approach of Jenssen, Keevash and Perkins~\cite{JKP19b} is a major algorithmic breakthrough for \#BIS which shows that several prominent \#BIS-hard problems can be approximated in polynomial time on bounded-degree expander graphs (and thus random $d$-regular graphs for $d=O(1)$). 
Further refinements of the method broaden the range of problems covered~\cite{GGS20,HJP23}, provide faster algorithms based on rapid mixing of Markov chains known as polymer dynamics~\cite{CGG+19}, or weaken the structural properties required by applying container theorems to combinatorial enumeration problems that arise in the method~\cite{CDF+22a,JPP22}. 
None of these developments give polynomial-time algorithms in dense graphs, however,
Carlson, Davies, and Kolla~\cite{CDK20} applied the polymer method to approximate the Potts model partition function on (bounded-degree) graphs with bounded threshold rank, but the conditions their analysis requires are prohibitively restrictive, and it is unclear whether their techniques can be applied to \#BIS.
While Risteski's approach has been extended and improved~\cite{JKR19,KLR22a}, results are stated for spin models with soft constraints such as the Ising and Potts models, and the approximation guarantees degrade in the presence of the hard constraints that are inherent to independent sets.

\subsection{Main result}

We specifically address the superficial similarities between algorithms for Max-CSPs and counting independent sets by giving an algorithm for approximately counting independent sets in dense, regular bipartite graphs which combines the highly successful techniques of polymer models, subspace enumeration, and container theorems for the enumeration of independent sets in bipartite graphs. Our approximation guarantee is of the strong type typically sought in approximate counting. We say that a relative $\eps$-approximation of a real number $x$ is a real number $y$ such that $e^{-\eps} \le x/y \le e^{\eps}$, and a fully polynomial randomized approximation scheme (FPRAS) is an algorithm that with probability at least $3/4$ outputs a relative $\eps$-approximation in time polynomial in the instance size and $1/\eps$.

\begin{theorem}\label{thm:main}
    For each $\delta\in(0,1)$ there is an FPRAS for \#BIS on the class of $\lfloor \delta n\rfloor$-regular bipartite graphs.
\end{theorem}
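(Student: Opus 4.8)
The plan is to reduce the count to a polymer partition function and then enumerate the relevant ``ground states'' spectrally. Write $G = (A \cup B, E)$ with $|A| = |B| = n/2$ and $d = \lfloor \delta n\rfloor$. An independent set with $I \cap A = S$ consists of $S$ together with any subset of $B \setminus N(S)$, so $i(G) = \sum_{S \subseteq A} 2^{n/2 - |N(S)|}$. The term $S = \varnothing$ already contributes $2^{n/2}$; every nonempty $S$ has $|N(S)| \ge d = \delta n$, so all defects $S$ of polylogarithmic size contribute only a $2^{-\Omega_\delta(n)}$ fraction of this, and (by, e.g., Kahn's bound for regular bipartite graphs) $i(G)$ is $2^{n/2}$ times a $\Theta_\delta(1)$ factor governed entirely by the ground-state structure. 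First I would run the standard decomposition from the polymer-model approach to \#BIS on expander graphs of Jenssen, Keevash and Perkins, writing $i(G) = Z^{(1)} + \dots + Z^{(q)}$ as a sum over ground states — each recording which macroscopic blocks of $A$ versus $B$ the independent set favours — with $Z^{(j)} = 2^{n/2}\,\Xi^{(j)}$, where $\Xi^{(j)}$ is a polymer partition function whose polymers are connected defect clusters in the square graph weighted by $2^{-|N(\gamma)|}$, plus lower-order inclusion--exclusion corrections for configurations straddling two ground states.

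The reason bounded degree does not suffice — and the point at which density enters — is that a dense regular graph can have a growing number of near-degenerate ground states and a divergent cluster expansion: a disjoint union of $t = \Theta(1/\delta)$ copies of $K_{d,d}$ has $2^{t}$ ground states, and around each there are exponentially many defect clusters of non-negligible weight, so $\Xi^{(j)}$ cannot be approximated by naive truncation. Here the density hypothesis is decisive. For the normalised adjacency matrix of a $d$-regular graph one has $\sum_i \lambda_i^2 = n/d = 1/\delta$, so at most $O_\delta(1)$ eigenvalues exceed any fixed constant in magnitude: dense regular graphs have bounded threshold rank. Using subspace enumeration in the style of Kolla and Tulsiani, I would enumerate, in time $n^{O_\delta(1)}$, a list of at most $n^{O_\delta(1)}$ candidate ``ground-state cuts'' — sets $Y \subseteq B$ that can arise as $N(S)$ for a heavy defect $S$ — by rounding the points of a sufficiently fine net of the top eigenspace; bounded threshold rank forces the indicator of any such $N(S)$ close to this subspace, so the list captures every ground state contributing to $i(G)$.

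For each candidate cut I would then compute $\Xi^{(j)}$ to relative error $\eps$. Conditioned on a ground state, the residual defect model is genuinely high-temperature — only sub-polynomially many defect clusters carry non-negligible weight — so its cluster expansion converges and can be evaluated by truncating at logarithmically many orders. To make this evaluation run in polynomial rather than quasi-polynomial time (the square graph has polynomial degree), and to bound the total weight of defects that are neither small nor anchored near an enumerated cut, I would use the container method for independent sets in bipartite graphs, which bounds the number of connected subsets of $A$ with a prescribed neighbourhood size by a quantity depending only on that size divided by $d$; this simultaneously prunes the candidate list to polynomial size and controls the truncation error. A final Monte Carlo estimate — or exact summation, where the residual counts are polynomially computable — then assembles $Z^{(1)} + \dots + Z^{(q)}$ together with the inclusion--exclusion corrections into a relative $\eps$-approximation of $i(G)$ that succeeds with probability at least $3/4$.

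The main obstacle I expect is reconciling the additive nature of spectral enumeration with the multiplicative precision an FPRAS demands: subspace enumeration locates the ground-state structure only up to $\pm\,\eps n$ vertices, whereas $i(G)$ moves by factors of $2^{\pm \eps n}$ under such perturbations. The resolution must be to use the enumeration purely to route to a residual polymer model and then count within it to genuine relative precision — which forces the net to have granularity $\Theta((\log n)/n)$, so that the heavy part of each ground state is pinned down exactly while the list stays polynomial, and forces a delicate argument that the residual counts are computed exactly, or at least to relative error. Checking that the genuinely degenerate instances — disjoint unions of complete bipartite blocks and their perturbations — are all captured by the spectral list, and that the container bounds control every error term uniformly in $\delta$, is where I expect the real work to lie.
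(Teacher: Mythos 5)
You have assembled most of the right ingredients: the ground-state decomposition from Jenssen--Keevash--Perkins, the observation that $\sum_i\lambda_i^2 = Nd$ forces only $O(1/\delta)$ eigenvalues of magnitude $\Omega(d)$ (bounded threshold rank), subspace enumeration to list candidate cuts, container bounds, and Monte Carlo for the residual counts. These are all components of the paper's proof. However, there is a genuine gap at the center of your plan, and it is exactly the gap that this paper identifies as the obstacle it overcomes in~\cite{JPP22}.

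You propose, for each enumerated ground state, to ``compute $\Xi^{(j)}$ to relative error $\eps$'' by truncating the cluster expansion at logarithmically many orders and using containers to control the truncation error. In a $\lfloor\delta n\rfloor$-regular graph the square graph has polynomial degree, so enumerating clusters of size $k=\Theta(\log(n/\eps))$ takes time $n^{\Theta(\log n)}$ for constant $\eps$; this is precisely the $\exp(\Omega(\log^4 n))$ bottleneck of~\cite{JPP22}, and containers as you invoke them do not remove it---they bound the \emph{number} of polymers of a given type, not the cost of summing over tuples of them. The paper's resolution is not to compute the residual polymer partition function at all. Instead it moves every $t_0$-contracting (non-expanding) piece into the enumerated family $\mathcal{A}$ of closed contracting sets---these are exactly the cuts of value at most $t_0 d$ that subspace enumeration catches---and then uses a Sapozhenko-type container bound $|\mathcal{G}(w,t)|\le 2^{w-\gamma t}$ for $t\ge t_0$ to show that the leftover partition function $\Xi_A$ over $t_0$-\emph{expanding} polymers is already sandwiched between $1$ and $e^{\eps/2}$. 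No cluster expansion, no truncation: one simply sets $\Xi_A\approx 1$. This is what converts quasi-polynomial into polynomial time, and it is the step missing from your argument. A secondary discrepancy: your ``inclusion--exclusion corrections for configurations straddling two ground states'' are unnecessary in the paper's scheme, because the identity
\[
i(G) \;=\; \sum_{A\in\mathcal{A}} \mathcal{D}_A\cdot 2^{|Y\setminus N(A)|}\cdot \Xi_A
\]
is exact by construction (each $A\subset X$ decomposes uniquely into $2$-linked closed components, split by $t_0$-contraction), so there is no overcounting to correct. You would do well to reorganize your decomposition in this exact, overlap-free form before worrying about the spectral enumeration; once you do, your worry about ``additive spectral precision versus multiplicative FPRAS precision'' dissolves, because the only quantity requiring relative precision is $\mathcal{D}_A$, which is estimated by sampling and does not depend on the net granularity.
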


We use spectral methods and subspace enumeration to enumerate small cuts in $d$-regular bipartite graphs via an $\eps$-net of the vector space spanned by small eigenvalues of the Laplacian matrix of the graph, influenced by the use of these methods in combinatorial optimization~\cite{ABS10,Kol10,KT07} and approximate counting. 
Some of our analysis builds upon the perturbative approach of~\cite{HPR19a,JKP19b} and
an important refinement of this method due to Jenssen and Perkins~\cite{JP20} (and with Potukuchi~\cite{JPP21}) that uses graph container lemmas of the type developed by Sapozhenko~\cite{S87,S01b}. 
While container theorems for independent sets have been used to control enumeration problems that arise in establishing the convergence of the cluster expansion~\cite{JP20,JPP21,JPP22}, and these have inspired container-like theorems for controlling analogous enumeration problems~\cite{CDF+22a}, our addition of subspace enumeration here has a different purpose. 

In terms of running time, our result improves upon the dense case of an algorithm of Jenssen, Perkins, and Potukuchi~\cite{JPP22} which runs in subexponential time on $d$-regular bipartite graphs for all $d\ge\omega(1)$. In the case $d=\Theta(n)$ their algorithm takes time $\exp(\Omega(\log^4 n))$, and our contribution works for any accuracy parameter $\eps$, which is not given by the methods in~\cite{JPP22}. 
The improvement stems from incorporating the spectral techniques mentioned above, which lets us sidestep algorithmic cluster expansion. 
That is, our spectral techniques overcome an obstacle in the algorithm of~\cite{JPP22} related to polynomial accuracy: we can achieve arbitrary accuracy without resorting to a naive enumeration of polymers (which in this setting are connected subgraphs of the square of the instance).

An interesting question posed in~\cite{JPP22} is whether \#BIS admits a general subexponential-time algorithm. 
One of our technical contributions is to show that a perspective on graph spectra involving higher-order eigenvalues and eigenvectors advances our understanding of \#BIS.

\section{Overview}

Our proof begins with the well-known observation that to enumerate independent sets in a bipartite graph $G=(X\cup Y, E)$ it suffices to enumerate deviations from the ``ideal'' independent set $X$. 
That is, we have the identity
\begin{equation}\label{eq:iGbasic}
    i(G) = \sum_{A\subset X} 2^{|Y\setminus N(A)|},
\end{equation}
because for a fixed $A\subset X$, any vertex of $Y\setminus N(A)$ can be added to $A$ without spanning an edge.
An important achievement of~\cite{JKP19b} is to give a rigorous proof that in bipartite graphs with strong expansion, typical independent sets are small deviations from either $X$ or $Y$. An algorithm follows provided one figures out how to efficiently enumerate the small deviations and quantify their contributions to $i(G)$; this is done in~\cite{JKP19b} by brute force enumeration and cluster expansion. 
Intuitively, we see a hint of the main idea in equation~\eqref{eq:iGbasic} as when $G$ is an expander we expect that $N(A) \gg |A|$ and so the terms on the right-hand side are small unless $|A|$ is small. 

If the bipartite graph is not an expander, then large deviations from $X$ and $Y$ must be handled. For example, in a $2n$-vertex disjoint union of complete $d$-regular bipartite graphs, a significant number of independent sets intersect both $X$ and $Y$ on $\Omega(n)$ vertices. 
To extend the algorithm to all bipartite graphs, using an idea from~\cite{JPP22} we can separate contributions from expanding and non-expanding pieces of the deviation $A$.
The first step is to break $A\subset X$ in the sum in~\eqref{eq:iGbasic} into pieces with disjoint neighborhoods. 
We say that $A$ is \emph{2-linked} if it is connected in the square $G^2$ of $G$, and note that any $A\subset X$ admits a unique partition into 2-linked subsets that we call \emph{components}. Let $\mathcal{K}(A)$ be the set of components of $A$.
A second refinement step groups the deviations of 2-linked components of $A$ according to their neighborhoods. The \emph{closure} $[A]$ of a set $A\subset X$ is $[A] := \{x\in X : N(x)\subset A\}$, and we say that $A$ is \emph{closed} if $A=[A]$. 
Note that $A$ is closed if and only if each component of $A$ is closed.
Then we have 
\begin{equation}\label{eq:iGclosed}
    i(G) = \sum_{\substack{A\subset X\text{ s.t.\ each}\\\text{component of $A$ is closed}}} \mathcal{D}_A\cdot 2^{|Y\setminus N(A)|},
\end{equation}
where 
\[ \mathcal{D}_A := \prod_{A'\in \mathcal{K(A)}}\bigl| \{ B\subset A' : B~\text{is 2-linked and}~N(B)=N(A')\}\bigr| .\]

A subset $A\subset X$ is called \emph{$t$-expanding} if $|N(A)| = |[A]| + t$, and (in a slight abuse of terminology that we hope the reader permits) $t$-contracting if $|N(A)| < |[A]| + t$. For a fixed $t_0$ that we determine later, we split the sum over $A$ according to $t_0$-contraction.
Then
\begin{equation}\label{eq:iGexpanding}
    i(G) = \sum_{\substack{A\subset X\text{ s.t.\ each}\\\text{component of $A$ is closed} \\ \text{and $t_0$-contracting}}} \mathcal{D}_A\cdot 2^{|Y\setminus N(A)|} \cdot \Xi_A,
\end{equation}
where $\Xi_A$ is defined as follows. For a closed subset $A\subset X$, 
let $X_A = X\setminus N^2(A)$ and $Y_A = Y\setminus N(A)$. 
A \emph{polymer} is a 2-linked subset of $X$, and a tuple of polymers is \emph{compatible} if their neighborhoods are pairwise disjoint. 
Let $\mathcal{P}_A$ be the set of polymers which are subsets of $X_A$.
Then 
\[ \Xi_A := \sum_{k\ge 0}\sum_{\substack{\{B_1,\dotsc,B_k\}\in\mathcal{P}_A\text{ compatible}\\\text{s.t.\ each $B_i$ not $t_0$-contracting}}} 2^{-\sum_{i=1}^k |N(B_i)|},\]
where the inner sum is over unordered tuples of compatible polymers, each of which is not $t_0$-contracting (equivalently, $t$-expanding for some $t\ge t_0$).
For convenience, we define $\mathcal{A}$ to be the set of all $A\subset X$ with closed, $t_0$-contracting components. 
Then the starting point for the analysis of our algorithm is the identity 
\begin{equation}\label{eq:iGidentity}
    i(G) = \sum_{A\in\mathcal{A}} \mathcal{D}_A\cdot 2^{|Y\setminus N(A)|} \cdot \Xi_A,
\end{equation}
derived as above.

Our algorithm simply enumerates the sets $A\in \mathcal{A}$, approximates each $\mathcal{D}_A$ term, and uses the fact (which we must prove) that $1$ is a good approximation of each $\Xi_A$ to approximate $i(G)$. 
The analysis of our algorithm thus splits into three separate components. 
Recall that the input is a $d$-regular bipartite graph $G$ on $2n$ vertices such that for some constant $\delta>0$ we have $d=\lfloor\delta n\rfloor$, and an approximation error $\eps$. We set $t_0=C\log(n/\eps)$, where $C=C(\delta)$ is large enough, and the correctness and running time of our algorithm follows from the results below. 
Note that for this choice of $t_0$ an exponential such as $4^{t_0}$ is polynomial in $n$ and $1/\eps$.

\begin{lemma}\label{lem:enumeratecontracting}
    For $t_0 \leq 2^{-8}d$, the set $\mathcal{A}=\{A\subset X : A\text{ closed and $t_0$-contracting}\}$ has size at most $n^{O(1/\delta)}\cdot 4^{t_0}$ and can be enumerated in the same time.
\end{lemma}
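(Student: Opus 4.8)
The plan is to exploit the bounded threshold rank of dense regular graphs via subspace enumeration, together with a graph container argument. \emph{Reduction to small cuts.} If $A\in\mathcal{A}$ is closed and $t_0$-contracting then $|N(A)|-|A|<t_0$, and since $A$ is closed one checks that every 2-linked component $A'$ of $A$ is again closed and that the nonnegative defects $s(A'):=|N(A')|-|A'|$ sum over the components to $|N(A)|-|A|<t_0$. In a $d$-regular graph any nonempty $A'$ has $|N(A')|\ge d$, so $|A'|>d-t_0\ge(1-2^{-8})d$, whence $A$ has fewer than $n/((1-2^{-8})d)<2/\delta$ components for $n$ large (small $n$ being trivial). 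Finally, a closed set $A$ with $|N(A)|-|A|<t_0$ gives the cut $\bigl(A\cup N(A)\,\big|\,(X\setminus A)\cup(Y\setminus N(A))\bigr)$ of $G$, which has exactly $d(|N(A)|-|A|)<dt_0$ crossing edges. So it is enough to show that, for each $s$, the closed 2-linked sets $A'\subset X$ with $s(A')=s$ can be listed together with their neighbourhoods in time $\mathrm{poly}(n)\cdot 2^{O(s)}$; a convolution over the fewer than $2/\delta$ components, whose defects sum to less than $t_0$, then gives $|\mathcal{A}|\le n^{O(1/\delta)}\cdot 4^{t_0}$ and an enumeration in the same time (polylogarithmic overheads from ordered compositions of $t_0$ are absorbed into the $n^{O(1/\delta)}$ factor).

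\emph{Spectral step.} Encode the cut of such a component $A'$ as $z\in\{-1,1\}^{2n}$ and let $\bL=dI-\mathbf{A}$ be the Laplacian of $G$. Then $z^{\transpose}\bL z/4$ is the number $ds$ of crossing edges, i.e.\ $z^{\transpose}\mathbf{A}z=2dn-z^{\transpose}\bL z=2dn-4ds$. Since $\Tr(\mathbf{A}^2)=2dn$ and $d=\lfloor\delta n\rfloor$, at most $2dn/(d/2)^2\le 16/\delta$ eigenvalues of $\mathbf{A}$ exceed $d/2$; let $V$ be the span of the corresponding eigenvectors, so $k:=\dim V=O(1/\delta)$. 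Writing $z=z_V+z_{V^{\perp}}$ and bounding $z^{\transpose}\mathbf{A}z\le d\|z_V\|^2+\tfrac{d}{2}\|z_{V^{\perp}}\|^2$ yields $\mathrm{dist}(z,V)^2=\|z_{V^{\perp}}\|^2=O(s)$. I would then build a constant-width net of the radius-$\sqrt{2n}$ ball of $V$, of size $(O(\sqrt{n}))^{k}=n^{O(1/\delta)}$ and constructible in that time from an eigendecomposition of $\mathbf{A}$, and round each net point coordinatewise to a vector $y\in\{-1,1\}^{2n}$. Every admissible $z$ lies within $\ell_2$-distance $O(\sqrt{s})$ of some such $y$, hence differs from $y$ in $O(s)$ coordinates; in particular each net point supplies a candidate set $\widehat N\subset Y$ (its ``$Y$-side'') with $|\widehat N\triangle N(A')|=O(s)$.

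\emph{Container step (the main obstacle).} It remains to turn ``$N(A')$ differs from the known $\widehat N$ in $O(s)$ vertices'' into a list of only $2^{O(s)}$ possibilities for $A'$ (which is determined by $N(A')$ because it is closed). A naive search over which $O(s)$ vertices to change costs $\binom{n}{O(s)}=n^{O(s)}$, which is only quasi-polynomial in $n$ and $1/\eps$; this is the obstacle that neither subspace enumeration nor a container lemma overcomes alone, since Sapozhenko's fingerprint for a set $N(A')$ of size up to $n$ has size $\Theta(|N(A')|\log d/d)=\Theta(\log n/\delta)$, again producing a quasi-polynomial $\binom{n}{\Theta(\log n/\delta)}$. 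The idea is to run a Sapozhenko-type container argument only on the $O(s)$-sized discrepancy $N(A')\triangle\widehat N$: because it has size $O(\log(n/\eps))$ while every vertex has $\Omega(n)$ neighbours, this discrepancy is supported on only $O(1/\delta)$ ``clusters'' of $G^2$ and so admits a fingerprint of size $O(1/\delta)$, and given that fingerprint together with the defect bound $s$ and the closedness of $A'$, the set $A'$ is pinned down to $2^{O(s)}$ choices, which one enumerates directly. Combining the $n^{O(1/\delta)}$ net points with these $n^{O(1/\delta)}$ fingerprints and $2^{O(s)}$ completions gives the per-component bound $\mathrm{poly}(n)\cdot 2^{O(s)}$, and feeding it into the convolution of the first step proves the lemma (after checking that candidates not arising from a genuine member of $\mathcal{A}$ can be discarded in polynomial time, which is routine). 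The delicate point is making the container estimate quantitatively tight in the dense regime and dovetailing it with the approximate neighbourhoods $\widehat N$ produced by the net.
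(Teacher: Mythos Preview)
Your spectral step is essentially the paper's Lemma~\ref{lem:enumeratecuts}, and the observation that a closed $t_0$-contracting $A$ yields a cut $A\cup N(A)$ of value $<t_0d$ is also what the paper uses. One minor difference: the paper does \emph{not} decompose $A$ into $2$-linked components; it treats all of $A\cup N(A)$ as a single cut and applies the next lemma once, which is simpler and avoids the extra factor of $1/\delta$ in the exponent of $n$ that your convolution over $\lfloor 2/\delta\rfloor$ components would introduce.

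The real gap is your container step. You correctly flag it as ``the main obstacle'', but the resolution you sketch---a Sapozhenko-type fingerprint of size $O(1/\delta)$ for the discrepancy $N(A')\triangle\widehat N$---does not yield only $2^{O(s)}$ candidates as written: a size-$O(1/\delta)$ fingerprint only localizes the discrepancy to a neighborhood of size $O(n)$, and choosing $O(s)$ vertices from that is still $n^{O(s)}$. The paper's argument (Lemma~\ref{lem:dense_enumerate_nonexp}) is different in kind and does not use fingerprints at all. Given an approximate cut $C$ of value $\le td$, with $A'=C\cap X$ and $W'=C\cap Y$, it sets
\begin{align*}
S_X &= \bigl\{v\in X\setminus A' : |N(v)\setminus W'|\le 3ct\bigr\},\\
S_Y &= \bigl\{v\in W' : |N(v)\cap A'|\le ct\bigr\},
\end{align*}
observes that each vertex of $S_X\cup S_Y$ contributes $\Omega(d)$ edges to $\nabla(C)$, hence $|S_X|+|S_Y|\le 2t$, and then proves that every closed $t$-contracting $A$ with $|A\triangle A'|,|N(A)\triangle W'|\le ct$ must satisfy $A\setminus A'\subseteq S_X$ and $W'\setminus N(A\cap A')\subseteq S_Y$, so that $A=[\, (A'\setminus N(S_Y'))\cup S_X'\,]$ for some $S_X'\subseteq S_X$, $S_Y'\subseteq S_Y$. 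This gives exactly $\le 4^{t}$ candidates per cut, computable in polynomial time. The key idea you are missing is this degree-threshold characterization of the exceptional vertices relative to the approximate cut produced by the net; once you have it, the rest of your outline goes through, and the component decomposition becomes unnecessary.
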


The proof of this lemma uses subspace enumeration to find small cuts in $G$, and then for each such small cut enumerates the sets $A\in\mathcal{A}$ which are close to the cut. See Section~\ref{sec:cuts}.

\begin{restatable}{lemma}{lemapproximateDA}\label{lem:approximateDA}
    Let $A\subset X$ be a 2-linked, closed $t_0$-contracting set. Then for $\eps', \rho>0$ there is a randomized algorithm running in time polynomial in $n$, $1/\eps'$ and $\log(1/\rho)$ that with probability at least $1-\rho$ outputs a relative $\eps'$-approximation to the number of 2-linked subsets $B$ of $A$ such that $N(B)=N(A)$.
\end{restatable}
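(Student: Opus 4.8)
The plan is to approximate the target quantity $|F|$, where $F = \{B \subseteq A : B$ is $2$-linked and $N(B) = N(A)\}$, by a direct Monte Carlo estimator; write $D = N(A)$ throughout. Since $|F| = 2^{|A|}\cdot p$ with $p := \Pr[B \in F]$ for a uniformly random $B \subseteq A$, and since $B \in F$ can be decided in time $\mathrm{poly}(n)$ (check $N(B) = D$, then breadth-first search in $G^2[B]$ to test $2$-linkedness), I would draw $O\!\left(p^{-1}(\eps')^{-2}\log(1/\rho)\right)$ independent samples of $B$ and return $2^{|A|}$ times the empirical frequency of the event that $B \in F$; a multiplicative Chernoff bound makes this a relative $\eps'$-approximation of $|F|$ with probability at least $1 - \rho$. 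The whole task thus reduces to establishing $p \ge 1/\mathrm{poly}(n, 1/\eps')$.

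The crux is the bound $p \ge 2^{-|S|}$, which holds for \emph{any} $2$-linked set $S \subseteq A$ with $N(S) = D$. Indeed I claim that then every $B$ with $S \subseteq B \subseteq A$ lies in $F$: the coverage condition follows from $D = N(S) \subseteq N(B) \subseteq N(A) = D$, and for $2$-linkedness note that $S$ dominates $G^2[A]$ — any $x \in A$ has $\emptyset \ne N(x) \subseteq D = N(S)$ and therefore shares a $Y$-neighbour with (i.e.\ is $G^2$-adjacent to) some vertex of $S$ — so each vertex of $B \setminus S$ has a $G^2$-neighbour in $S \subseteq B$, whence $G^2[B]$ is connected because $G^2[S]$ is. This gives $\{B : S \subseteq B \subseteq A\} \subseteq F$ and hence $p \ge 2^{|A| - |S|}/2^{|A|} = 2^{-|S|}$.

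It then remains to build such an $S$ of size $O_\delta(t_0)$, and here I would use that $A$ is closed and $t_0$-contracting, so $|D| \le |A| + t_0 - 1$, in two ways. First, since $\sum_{y \in D}\deg_A(y) = d|A|$, at most $2(|D| - |A|) < 2t_0$ vertices $y \in D$ can satisfy $\deg_A(y) \le d/2$; selecting one $A$-neighbour of each such $y$ gives a set $S_0$ with $|S_0| < 2t_0$ and $N(S_0)$ containing every low-degree vertex of $D$. All remaining vertices of $D$ have $\deg_A(y) > d/2 = \Omega(n)$, so a greedy set-cover argument (every greedy choice covers an $\Omega(\delta)$-fraction of whatever is still uncovered) produces $S_1 \subseteq A$ with $|S_1| = O(\delta^{-1}\log n)$ and $N(S_0 \cup S_1) = D$. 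Second, a vertex $x \in A$ has all of its $G^2[A]$-non-neighbours among the $x'$ with $N(x') \subseteq D \setminus N(x)$, of which there are at most $|D| - d \le |A| + t_0 - 1 - d$; hence $G^2[A]$ has minimum degree at least $d - t_0$, which is $\Omega_\delta(|A|)$, and being connected it has diameter $O(1/\delta)$. Consequently $S_0 \cup S_1$ can be joined into a single $2$-linked set $S$ by adjoining a shortest $G^2[A]$-path from a fixed vertex to each vertex of $S_0 \cup S_1$; this adds only $O_\delta(|S_0 \cup S_1|)$ vertices and leaves $N(S) = D$ intact. Then $|S| = O_\delta(t_0)$, so $p \ge 2^{-O_\delta(t_0)}$, which is $1/\mathrm{poly}(n, 1/\eps)$ by the choice $t_0 = C\log(n/\eps)$ — and polynomial in $1/\eps'$ wherever the lemma is invoked, with the overall sample complexity carrying a further factor $4^{t_0}$ that the choice of $t_0$ keeps polynomial.

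I expect the last step to be the main obstacle: the low-degree vertices of $D$, although few, may have large $A$-neighbourhoods (so they cannot simply be resolved by brute-force enumeration), and the several pieces of $S$ must be combined without spoiling the covering property $N(S) = D$ — precisely where the minimum-degree and hence bounded-diameter estimate for $G^2[A]$, itself a consequence of closedness and $t_0$-contraction, is doing the work. Everything else — the estimator, the reduction to $p \ge 1/\mathrm{poly}$, and the Chernoff analysis — is routine.
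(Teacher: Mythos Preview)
Your proposal is correct and follows essentially the same route as the paper: both reduce the problem to a naive Monte Carlo estimate of $\Pr[B\in F]$ for uniform $B\subseteq A$, and both lower-bound this probability by exhibiting a small $2$-linked $S\subseteq A$ with $N(S)=N(A)$ and observing that every superset of $S$ in $A$ lies in $F$. The only difference is that the paper imports the existence of such an $S$ of size $\frac{2|A|}{d}\log d + \frac{2|N(A)|}{d} + 2(|N(A)|-|A|)$ as a black box from~\cite[Cor.~10]{JPP22}, whereas you build it by hand (cover the at most $2t_0$ low-degree vertices of $N(A)$, greedily cover the rest, then connect via short $G^2[A]$-paths using your minimum-degree bound); your bound $|S|=O_\delta(t_0)$ matches theirs and the resulting sample complexity is the same $n^{O(1/\delta)}4^{t_0}(\eps')^{-2}\log(1/\rho)$.
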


This lemma uses straightforward estimation of an expectation by repeated sampling, and is very similar to the analogous result in~\cite{JPP22}.
Observe that if $A\in\mathcal A$ has $\ell$ components then running this algorithm on each component with $\eps'\le \eps/(2\ell)$ yields a relative $\eps/2$-approximation to $\mathcal{D}_A$. We use the upper bound $\ell \le 2/\delta = O(1)$ which holds because any $t_0$-contracting set must have size at least $d-t_0\ge d/2$ (this inequality requires that $\eps$ is not exponentially small, but in this case, we can solve the problem exactly by brute force instead).
The proof is in Section~\ref{sec:approxDA}.

\begin{lemma}\label{lem:polymers}
    Let $A\in\mathcal{A}$, then $1\le \Xi_A\le e^{\eps/2}$.
\end{lemma}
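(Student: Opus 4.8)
The lower bound $\Xi_A\ge 1$ is immediate: the term with no polymers ($k=0$) already contributes $2^{0}=1$, and every other term is nonnegative.

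For the upper bound, write $\cQ$ for the set of \emph{valid} polymers --- the $2$-linked subsets $\gamma\subset X_A$ that are not $t_0$-contracting --- and put $w(\gamma)=2^{-|N(\gamma)|}$. Since $G$ is $d$-regular with $d\ge 1$, every polymer has $N(\gamma)\neq\emptyset$, so no polymer is compatible with itself; thus each compatible family in $\Xi_A$ consists of distinct polymers and corresponds to exactly $k!$ ordered tuples. Dropping the pairwise-compatibility constraint from those ordered tuples gives the elementary bound
\begin{equation}
    \Xi_A\;=\;\sum_{k\ge 0}\frac{1}{k!}\sum_{\substack{(\gamma_1,\dotsc,\gamma_k)\in\cQ^{k}\\\text{pairwise compatible}}}\prod_{i=1}^{k}w(\gamma_i)\;\le\;\sum_{k\ge 0}\frac{1}{k!}\Bigl(\sum_{\gamma\in\cQ}w(\gamma)\Bigr)^{k}\;=\;\exp\!\Bigl(\sum_{\gamma\in\cQ}w(\gamma)\Bigr),
\end{equation}
so it suffices to prove $\sum_{\gamma\in\cQ}w(\gamma)\le\eps/2$. (One could instead run the cluster expansion with a Koteck\'y--Preiss condition, but the above is enough here.)

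To estimate this sum I would group the valid polymers by their closure. If $\gamma$ is $2$-linked then so is $[\gamma]$, and $N([\gamma])=N(\gamma)$ --- both because $N(x)\subset N(\gamma)$ for every $x\in[\gamma]$. Hence, for $\gamma\in\cQ$, the set $S:=[\gamma]$ is closed, $2$-linked, and has defect $|N(S)|-|S|=|N(\gamma)|-|[\gamma]|\ge t_0$. Conversely, for a fixed closed $2$-linked $S$ each $\gamma$ with $[\gamma]=S$ satisfies $\gamma\subset S$ and $N(\gamma)=N(S)$, so there are at most $2^{|S|}$ of them, each of weight $2^{-|N(S)|}=2^{-|S|-t}$ where $t:=|N(S)|-|S|$. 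Therefore
\begin{equation}
    \sum_{\gamma\in\cQ}w(\gamma)\;\le\;\sum_{\substack{S\subset X\ \text{closed, }2\text{-linked}\\ |N(S)|-|S|\ge t_0}}2^{-(|N(S)|-|S|)}\;=\;\sum_{t\ge t_0}2^{-t}\,N_t,
\end{equation}
where $N_t$ is the number of nonempty closed $2$-linked sets $S\subset X$ with $|N(S)|-|S|=t$.

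The heart of the matter is bounding $N_t$, and here I would invoke a graph container lemma in the spirit of Sapozhenko~\cite{S87,S01b} (as used for analogous enumeration problems by Jenssen, Perkins and Potukuchi~\cite{JP20,JPP21}): a closed $2$-linked set $S$ is determined by a small ``fingerprint''. The point is that when $d=\lfloor\delta n\rfloor$ --- equivalently, when $G$ has bounded threshold rank $O(1/\delta)$ --- the fingerprint can be taken of bounded size $O(1/\delta)$ rather than the $O\bigl(\tfrac{|N(S)|}{d}\log d\bigr)$ one pays in the sparse case, so that $N_t\le n^{O(1/\delta)}\cdot\beta^{t}$ for some constant $\beta=\beta(\delta)<2$ (a bound polynomial in both $n$ and $t$ would serve equally well). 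Then
\begin{equation}
    \sum_{t\ge t_0}2^{-t}N_t\;\le\;n^{O(1/\delta)}\sum_{t\ge t_0}\bigl(\beta/2\bigr)^{t}\;=\;n^{O(1/\delta)}\cdot\frac{(\beta/2)^{t_0}}{1-\beta/2},
\end{equation}
which is at most $\eps/2$ as soon as $t_0=C\log(n/\eps)$ with $C=C(\delta)$ large --- precisely the role of the choice of $t_0$. (If $\eps$ is exponentially small in $n$ this last inequality fails, but then $i(G)$ is computed exactly by brute force, as already noted.) The main obstacle is exactly this container estimate: both keeping the $n$-dependence polynomial rather than quasipolynomial, and keeping the exponential base in $t$ below $2$, rely on using the density of $G$ to dispense with the $\log d$ losses that Sapozhenko's method incurs in the sparse regime; once that is in hand, the remaining steps are routine.
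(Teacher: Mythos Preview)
Your skeleton matches the paper's: the lower bound from the empty family, the upper bound $\Xi_A\le\exp\bigl(\sum_\gamma w(\gamma)\bigr)$ by dropping compatibility, and a stratification by defect $t\ge t_0$ reduced to a container-type enumeration. Your regrouping by closure is a clean equivalent of the paper's stratification of $\mathcal G(w,t)$; the computations after the key bound are the same.

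The gap is in how you justify the key bound $N_t\le n^{O(1/\delta)}\beta^t$. You claim a Sapozhenko fingerprint of size $O(1/\delta)$, but the standard $\varphi$-approximation gives size $O\bigl(\tfrac{|N(S)|}{d}\log d\bigr)$, which in the dense regime $d=\lfloor\delta n\rfloor$ is $O((\log n)/\delta)$, not $O(1/\delta)$; that only yields $n^{O((\log n)/\delta)}$ choices, forcing $t_0=\Theta(\log^2 n)$ and hence a quasi-polynomial rather than polynomial algorithm. The paper does \emph{not} shrink the fingerprint. Instead it observes that for $S$ closed with defect $t$ the set $S\cup N(S)$ is a cut of value at most $td$, and uses \emph{subspace enumeration} (Lemma~\ref{lem:enumeratecuts}): a trace argument shows the Laplacian has only $O(1/\delta)$ eigenvalues below $d/2$, and rounding an $\eps$-net in their span produces $n^{O(1/\delta)}$ reference cuts so that every cut of value $\le td$ lies within Hamming distance $O(t)$ of one of them. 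From the nearby reference cut one builds an essential set for $S$ (Claims~\ref{claim:essential1} and~\ref{claim:essential2}), then runs the container algorithm (Lemma~\ref{lem:container}, $n^{O(1)}$ outputs) and the reconstruction step (Lemma~\ref{lem:reconstruction}) to get $2^{w-\gamma'' t}$ polymers per container; a separate argument via~\cite{JPP22} (Lemma~\ref{lem:largetcontainers}) handles $t\ge\log^4 n$. Your parenthetical ``equivalently, bounded threshold rank $O(1/\delta)$'' is pointing at exactly the right structure, but the mechanism by which it enters --- enumerating small cuts through the low eigenspace rather than shortening a covering fingerprint --- is precisely the idea your sketch is missing.
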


This result means that $1$ is a relative $\eps/2$-approximation for each of the $\Xi_A$ terms appearing in~\eqref{eq:iGidentity}. 
The proof is based on graph container methods due to Sapozhenko~\cite{S87,S01b}, which have since been refined,~\cite{GT06},~\cite{GALVIN08}~\cite{KP22},~\cite{PARK22}, and their application to algorithmic counting~\cite{JP20},~\cite{JPP21},~\cite{JPP22}. We give the proof in Section~\ref{sec:containers}.

\section{The algorithm and proof of Theorem~\texorpdfstring{\ref{thm:main}}{\ref*{thm:main}}}

\begin{description}
\item[Input] A $\lfloor\delta n\rfloor$-regular bipartite graph $G=(X\cup Y, E)$ on $2n$ vertices and an approximation error $\eps>0$.
\item[Output] A relative $\eps$-approximation $i'$ of $i(G)$.
\end{description}

Recall that $C=C(\delta)$ is a large enough constant, and that $t_0=C\log(n/\eps)$. In the following proof, implicit constants in the $O(\cdot)$ notation are allowed to depend on $\delta$ but not $\eps$.
If $\eps \le n\exp(-d /(2^8C))$ then we can afford to run a brute force algorithm that computes $i(G)$ exactly in time $e^{O(n)}$ and the running time is still polynomial in $1/\eps$. 
Otherwise, we note that for all large enough $n$ we have $d-2^7 t_0\ge d/2$ and run the following algorithm.
For convenience, we assume that $\eps\le 1$ and simply run the algorithm for $\eps=1$ if the given $\eps$ is larger.

First, construct the set $\mathcal A$, which can be done in time $(n/\eps)^{O(1)}$ by Lemma~\ref{lem:enumeratecontracting}. 
Note also that $|\mathcal{A}|$ is polynomial in $n$ and $1/\eps$.
Then, for each $A\in\mathcal A$ compute an approximation $\tilde{\mathcal D}_A$ of $\mathcal D_A$ by running the algorithm of Lemma~\ref{lem:approximateDA} for each component of $A$ with $\eps'=\eps\delta/4=\Theta(\eps)$ and $\rho=(n/\eps)^{-C'}$ for a large enough constant $C'$.
Recall that there are at most $O(1)$ components of each such $A$, and note that the total number of times the algorithm of Lemma~\ref{lem:approximateDA} is used is thus $(n/\eps)^{O(1)}$. 
By a union bound, with probability at least $3/4$ we get the desired approximation in each application of the lemma, and thus a valid relative $\eps/2$-approximation $\tilde{\mathcal D}_A$ of each $\mathcal D_A$.
Then output $i'=\sum_{A\in\mathcal{A}}\tilde{\mathcal D}_A 2^{|Y\setminus N(A)|}$.
By Lemma~\ref{lem:polymers} and the analysis above the output is a valid $\eps$-approximation of $i(G)$ obtained in time $(n/\eps)^{O(1)}$, thus proving Theorem~\ref{thm:main}.

\section{Subspace enumeration and contracting sets}\label{sec:cuts}

The proof of Lemma~\ref{lem:enumeratecontracting} has two parts. First, we show how to enumerate small cuts using subspace enumeration. For related results see~\cite{ABS10,Kol10,KT07}. 
We use the term \emph{cut} to mean a subset of $V=X\cup Y$, and the \emph{value} $|\nabla(C)|$ of a cut $C$ is the number of edges with precisely one endpoint in $C$.

\begin{lemma}\label{lem:enumeratecuts}
Let $G=(V,E)$ be a $d$-regular bipartite graph on $N=2n$ vertices. 
There is a set $\Ccut\subset 2^V$ such that $|\Ccut|\le n^{O(1/\delta)}$ and $\Ccut$ has the following property.
For all $t\ge 1$ and cuts $S\subset V$ with value $|\nabla(S)|\le td$, there is some $C \in \Ccut$ such that $|S \triangle C| \leq 32t$ and $|\nabla(C)|\le 33td$.
Moreover, the set $\Ccut$ can be constructed in time $n^{O(1/\delta)}$ and hence $|\Ccut|\le n^{O(1/\delta)}$.
\end{lemma}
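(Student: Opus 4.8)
The plan is to use subspace enumeration on the eigenspace of the normalized Laplacian $\bL = \bI - \tfrac{1}{d}\bF$ (where $\bF$ is the adjacency matrix) associated with small eigenvalues. First I would observe that a $d$-regular bipartite graph on $N=2n$ vertices with $d=\lfloor\delta n\rfloor$ has bounded threshold rank: writing $\bL$ in terms of its spectral decomposition, the number of eigenvalues of $\bL$ that are at most some threshold $\tau$ is at most $n^{O(1/\delta)}$ (in fact, a crude bound suffices here). The quickest route is via the trace: $\Tr(\bF^2) = Nd = 2nd$, so $\sum_i \mu_i^2 = 2nd$ where $\mu_i$ are the adjacency eigenvalues, and hence the number of $i$ with $|\mu_i|\ge d/2$ is at most $8n/d = O(1/\delta)$. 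Since $\bL$ has eigenvalue $\lambda_i = 1 - \mu_i/d$, the eigenvalues in $[0,1/2]\cup[3/2,2]$ number at most $O(1/\delta)$; call $\Lambda$ the span of the corresponding eigenvectors, so $\dim\Lambda = r = O(1/\delta)$.

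The key point is that a cut $S$ with small value has indicator vector well-approximated by its projection onto $\Lambda$. Writing $x = \indicator{S}$, the quadratic form $x^\transpose \bL x = |\nabla(S)|/d$ (up to the bipartite factor-of-two bookkeeping), and since every eigenvalue outside the small set is at least, say, $1/2$, the component $x^\perp$ of $x$ orthogonal to $\Lambda$ satisfies $\tfrac12 \|x^\perp\|_2^2 \le x^\transpose \bL x \le |\nabla(S)|/d \le t$, so $\|x - P_\Lambda x\|_2^2 \le 2t$ where $P_\Lambda$ is orthogonal projection onto $\Lambda$. Now I would take an $\eps$-net $\sN$ of the unit ball of $\Lambda$ (in $\ell_2$) with $\eps$ a small absolute constant, which has size $(O(1/\eps))^r = n^{O(1/\delta)}$; scaling, a net at scale $\eps\sqrt{N}$ of the ball of radius $\sqrt{N}$ has the same size. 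For each net point $v$, round coordinatewise: set $C_v := \{u \in V : v_u \ge 1/2\}$. The candidate set is $\Ccut := \{C_v : v \in \sN\}$, of size $n^{O(1/\delta)}$, constructible in time $n^{O(1/\delta)}$ once $\Lambda$ is computed (eigendecomposition is polynomial time).

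It remains to verify the approximation guarantee. Given $S$ with $|\nabla(S)|\le td$, the vector $P_\Lambda x$ lies in $\Lambda$ with $\|P_\Lambda x\|_2 \le \|x\|_2 \le \sqrt N$, so there is a net point $v$ with $\|v - P_\Lambda x\|_2 \le \eps\sqrt N$; then $\|v - x\|_2 \le \|v - P_\Lambda x\|_2 + \|P_\Lambda x - x\|_2 \le \eps\sqrt N + \sqrt{2t}$. The symmetric difference $|S \triangle C_v|$ counts coordinates where $x$ and the rounding of $v$ disagree, and each such coordinate contributes at least $1/4$ to $\|v-x\|_2^2$ (since $x_u\in\{0,1\}$ and $v_u$ is on the wrong side of $1/2$). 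Hence $|S\triangle C_v| \le 4\|v-x\|_2^2 \le 8\eps^2 N + 16 t$. The honest obstacle is the term $8\eps^2 N$: taking $\eps$ a fixed constant leaves an $\Omega(N)$ error, which is useless. The fix — and the real content of the argument — is that a constant-accuracy net of $\Lambda$ does \emph{not} suffice: instead one must net $\Lambda$ at scale $\Theta(\sqrt{t/N})$, i.e. the net must depend on $t$, or equivalently one takes a single net fine enough that $\eps^2 N \le t$; since $t$ ranges over $[1,\infty)$ the worst case is $t=1$, requiring $\eps = \Theta(1/\sqrt N)$ and a net of size $(O(\sqrt N))^r = n^{O(r)} = n^{O(1/\delta)}$, still within budget. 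With $\eps$ so chosen, $8\eps^2 N \le 16$ and $|S\triangle C_v|\le 16t+16 \le 32 t$ for $t\ge1$, and finally $|\nabla(C_v)| \le |\nabla(S)| + d\cdot|S\triangle C_v| \le td + 32td \le 33td$ since flipping one vertex changes the cut value by at most $d$. This proves the claim; the only subtlety to get right is the quantitative relationship between net scale, the dimension $r$, and the target error $32t$, and in particular that a net fine enough to handle $t=1$ is simultaneously fine enough for all larger $t$.
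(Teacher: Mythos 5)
Your proposal is correct and follows essentially the same route as the paper: the trace bound $\Tr(A^2)=Nd$ to control the threshold rank, the quadratic-form inequality showing $\|x-P_\Lambda x\|^2\le 2t$, an $\eps$-net of the low-eigenvalue subspace with coordinatewise rounding, and a triangle-inequality argument for the $32t$ and $33td$ bounds. The only cosmetic differences are that the paper works with the unnormalized Laplacian $dI-A$ and bounds $|S\triangle C|$ by noting that the rounding $\mathbf p'$ is the closest indicator vector to $\mathbf p$ (hence $\|\mathbf p - \mathbf p'\|\le\|\mathbf s-\mathbf p\|$) rather than via your ``each disagreeing coordinate contributes $\ge 1/4$'' observation, and the paper sets the absolute mesh to $\sqrt{2}$ from the outset rather than arriving at the correct scale after first observing that a net with mesh proportional to the radius fails --- the same conclusion you reach.
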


\begin{proof}
Let $d = \lam_1 \ge \dotsb \ge \lam_{N} = -d$ be the spectrum of the adjacency matrix $A$ of $G$. The facts that $\lam_1=d=\lam_{N}$ and that the spectrum of $A$ is symmetric about zero are standard, see e.g.~\cite{C96}.
Let $k$ be such that $A$ has precisely $2k$ eigenvalue of absolute value at least $d/2$.
Counting closed walks of length two gives
\[ \Tr(A^2) = Nd = \sum_{i=1}^N \lam_i^2 \ge k d^2/2,\]
and hence $k \le 4n/d=O(1/\delta)$.

Let $L=dI-A$ be the Laplacian matrix of $G$ and let $\mathbf e_1,\dotsc,\mathbf e_{N}$ be an orthonormal basis of eigenvectors of $L$ such that $\mathbf e_i$ has eigenvalue $\mu_i$ with $0 = \mu_1 \le \dotsb \le \mu_N = 2d$. 
By the definition of $k$, it must be the case that $\mu_{k+1} > d/2$.
Let $U$ be the span of $\mathbf e_1,\dotsc \mathbf e_k$, and $U^\perp$ be the orthogonal complement of $U$.
For $\eps=\sqrt{2}$, we require an efficient construction of an $\eps$-net $\mathcal E\subset U$ covering all vectors of $L^2$-norm at most $\sqrt n$ in $U$ for some $\eps$. 
For example, we can take 
\[ \mathcal E:= \left\{ \mathbf p = \sum_{i=1}^k x_i\mathbf e_i : x_1,\dotsc,x_k\in \eps/\sqrt{k}\cdot \mathbb{Z},\, \|\mathbf p\|\le \sqrt n \right\},\]
yielding $|\mathcal E|\le (2\sqrt{nk}/\eps)^k$. 
Then every vector in $U$ with $L^2$-norm at most $\sqrt n$ lies at most distance $\eps$ from a vector in $\mathcal E$.

The algorithm to construct $\Ccut$ is as follows. Start with $\Ccut = \emptyset$ and 
for each point $\mathbf{p}\in \mathcal E$, form $\mathbf{p}'$ by rounding each coordinate of $\mathbf{p}$ to $\{0,1\}$ (breaking ties with $1/2 \mapsto 1$) and add the vertex subset with indicator vector $\mathbf p'$ to $\Ccut$.

We now show that $\Ccut$ has the desired properties. 
By the construction of $\Ccut$ and $\mathcal E$ we have $|\Ccut| \le |\mathcal E| \le n^{O(1/\delta)}$.
To establish the other property of $\Ccut$, let $t\ge 1$ and consider an arbitrary subset $S\subset V$ with $|\nabla(S)|\le td$.
Let $\mathbf s$ be the indicator vector of the set $S$ and write this vector in the eigenbasis of $L$ as $\mathbf s =\sum_{i=1}^N s_i \mathbf e_i$. 
Let $\mathbf u = \sum_{i=1}^k s_i \mathbf e_i$ be the projection of $\mathbf s$ onto $U$ and let $\mathbf p$ be the point in $\mathcal E$ closest to $\mathbf u$.
Indicator vectors of subsets of $V$ have $L^2$-norm at most $\sqrt n$, and hence $\|\mathbf u-\mathbf p\|\le \eps$.

Without considering our need for an efficient construction, the idea is that because $\nabla(S)$ is small we know that $\mathbf s$ is an indicator vector close to its projection $\mathbf u$ onto $U$. 
Thus, if we form $\Ccut$ as the union of all sets whose indicator vectors are close to vectors in $U$, each set $S$ of interest has an indicator vector that lies within a distance twice the definition of ``close'' to a set in $C$.

To make the above sketch efficient, we replace $U$ with the $\eps$-net $\mathcal P$. Note that 
\begin{align*}
    td \ge |\nabla(S)| = \mathbf s^{\transpose} L \mathbf s
    =\sum_{i=1}^N \mu_i s_i^2 \ge \frac{d}{2}\sum_{i=k+1}^N s_i^2.
\end{align*}
But $\sum_{i=k+1}^N s_i^2 = \|\mathbf s-\mathbf u\|^2$, so we have the bound $\|\mathbf s-\mathbf u\|\le \sqrt{2t}$.
Then we immediately have $\|\mathbf s - \mathbf p\| \le \sqrt{2t}+\eps$ from the triangle inequality.
Let $\mathbf p'$ be obtained from $\mathbf p$ by rounding each coordinate to $\{0,1\}$, breaking ties with $1/2 \mapsto 1$, and let $C\subset V$ be the set whose indicator vector is $\mathbf p'$. 
We have $|S\triangle C| = \|\mathbf s-\mathbf p'\|^2$ and we bound the latter with the triangle inequality.
In particular, $\mathbf s$ is an indicator vector of distance at most $\sqrt{2t}+\eps$ from $\mathbf p$ and $\mathbf p'$ must be the closest indicator vector to $\mathbf p$, hence $\|\mathbf p-\mathbf p'\|\le \sqrt{2t}+\eps$. 
Then $\|\mathbf s-\mathbf p\| \le 2(\sqrt{2t} + \eps)$, and because $t\ge 1$ and $\eps=\sqrt{2}$ we have 
\[ |S\triangle C|\le 4(\sqrt{t/\delta} + \eps)^2 \le 32t. \]
It remains to bound the value of the cut $|\nabla(C)|$, and the desired bound follows from the observation that
\[ |\nabla(C)| \le |\nabla(S)|+d|S\triangle C| \le td + 32td = 33td.\qedhere\]
\end{proof}

Lemma~\ref{lem:enumeratecuts} tells us that there is an efficient construction of a collection $\Ccut$ of cuts such that any small cut $S$ must be close to a cut in $\Ccut$ in Hamming distance.
We now show that given a small cut $S$ we can enumerate the sets $A\in\mathcal{A}$ which are close to $S$. For this to be useful, it must be that each $A\in\mathcal{A}$ is close to some small cut, and we give the details of this later.

\begin{lemma}\label{lem:dense_enumerate_nonexp}
Fix any $c\geq 1$ and let $t \leq \frac{d}{8c}$. Given a cut $C$ with value at most $td$, there are at most $4^t$ closed $t$-contracting subsets $A\subset X$ such that $|A \triangle (C \cap X)|\leq ct$ and $|N(A) \triangle (C \cap Y)| \leq ct$. 
Moreover, these sets $A$ can be enumerated in time $4^t \cdot n^{O(1)}$.
\end{lemma}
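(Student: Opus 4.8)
The plan is to show that a closed $t$-contracting set $A$ close to the given cut $C$ is determined by a small amount of "boundary" data, and then count the number of choices for that data. Write $C_X = C \cap X$ and $C_Y = C \cap Y$. By hypothesis $A$ differs from $C_X$ in at most $ct$ vertices and $N(A)$ differs from $C_Y$ in at most $ct$ vertices. The key observation should be that $A$ is \emph{closed}, so $A = [N(A)] = \{x \in X : N(x) \subseteq N(A)\}$; thus $A$ is a function of $N(A)$ alone. Hence it suffices to count the possible sets $N(A) \subseteq Y$, and each such set lies within Hamming distance $ct$ of the fixed set $C_Y$. Naively that gives $\binom{n}{\le ct} 2^{ct}$ possibilities, which is far too many, so the real work is to exploit $t$-contraction to cut this down to $4^t$.

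The mechanism I would use is an \emph{expansion/container-style} argument local to $C$. Since $A$ is $t$-contracting, $|N(A)| \le |[A]| + t = |A| + t$ (as $A$ is closed, $[A]=A$), so the "defect" $|N(A)| - |A|$ is at most $t$. Now consider the bipartite subgraph between $A \triangle C_X$ and $N(A) \triangle C_Y$, together with edges to the common parts. The idea is that the symmetric differences on the two sides are \emph{linked} through edges of $G$: a vertex of $A \setminus C_X$ forces its (few) non-$C_Y$ neighbors into $N(A)$, and a vertex of $C_Y \setminus N(A)$ must have \emph{no} neighbor in $A$, which constrains $A$. Because $G$ is $d$-regular and $t \le d/(8c)$, a set of $\le ct \le d/8$ vertices on one side has neighborhood filling almost all of the other side; this large expansion means the discrepancy sets cannot be "spread out" and must instead be pinned down by a bounded number of bits. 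Concretely I would aim to show: once we know $A \cap C_X$ (equivalently, which $\le ct$ vertices of $C_X$ are \emph{removed}) the rest of $A$ is forced, and the number of valid choices of the removed set is at most $2^{2t}=4^t$ by a counting argument weighing the $t$ units of allowed contraction against the cost of each removed/added vertex. A clean way to package this is: define a "fingerprint" $\Phi(A)$ consisting of $O(t)$ vertices (e.g.\ a minimal subset of $A\triangle C_X$ whose neighborhood already determines $N(A)$ up to the $C_Y$-discrepancy), show $A \mapsto \Phi(A)$ is injective on the family in question, and bound the number of fingerprints by $4^t$.

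For the algorithmic claim, the enumeration follows the same skeleton: iterate over all candidate fingerprints (there are $n^{O(1)}$ sets of size $O(t)$, but we only keep the $\le 4^t$ that are consistent), and for each one reconstruct the unique candidate $A$ in $n^{O(1)}$ time by the closure formula $A = [N(A)]$, then verify in $n^{O(1)}$ time that $A$ is indeed closed, $t$-contracting, and within the required Hamming distances of $C$. This gives total time $4^t \cdot n^{O(1)}$ as claimed, with the $n^{O(1)}$ absorbing the per-candidate reconstruction and verification.

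The main obstacle I anticipate is making the "forcing" step fully rigorous: showing that the allowed contraction budget of $t$ really does limit the number of ways to perturb $C_X$ to a valid $A$ down to $4^t$, rather than something like $n^{O(t)}$. This is exactly the place where $d$-regularity and the bound $t \le d/(8c)$ must be used quantitatively — the large neighborhood expansion of small vertex sets is what converts a potential $\binom{n}{O(t)}$ into $2^{O(t)}$. I would expect to need a careful accounting lemma along the lines of Sapozhenko's graph-container method (each added or removed vertex either "pays" by increasing $|N(A)|-|A|$, or is itself forced by previously chosen vertices), and getting the constant $32$ from Lemma~\ref{lem:enumeratecuts} to line up with the constants here so that the hypothesis $t \le d/(8c)$ suffices is the delicate bookkeeping.
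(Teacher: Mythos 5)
Your high-level plan---a container/fingerprint argument with a small set of bits that determines $A$---is the right shape, and you correctly identify the crux: converting the naive $n^{O(t)}$ enumeration of perturbations into $2^{O(t)}$. But you leave exactly that step as a black box (``by a counting argument weighing the $t$ units of allowed contraction against the cost of each vertex''), which is where the actual work lives. Without it, the proposal does not yet contain a proof.

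The paper's mechanism is more concrete and slightly different in framing from what you sketch. Rather than claiming that $A\cap C_X$ forces the rest of $A$, the paper exhibits two \emph{explicit} small sets computable from $C$ alone: $S_X := \{v\in X\setminus A' : |N(v)\setminus W'|\le 3ct\}$ (potential additions to $A'$, namely vertices outside $A'$ whose neighborhood is almost entirely inside $W'$) and $S_Y := \{v\in W' : |N(v)\cap A'|\le ct\}$ (potential ``holes'' in $W'$, namely vertices of $W'$ with few neighbors in $A'$), where $A'=C\cap X$, $W'=C\cap Y$. Two claims, using closedness and $t$-contraction of $A$, show that $A\setminus A'\subseteq S_X$ and $W'\setminus N(A\cap A')\subseteq S_Y$. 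Then the cut bound $|\nabla(C)|\le td$ together with $d$-regularity gives $|S_X|(d-3ct)+|S_Y|(d-ct)\le td$, and since $t\le d/(8c)$ this yields $|S_X|+|S_Y|\le 2t$. Finally, each valid $A$ is reconstructed as $A=[(A'\setminus N(S_Y'))\cup S_X']$ for some $S_Y'\subseteq S_Y$, $S_X'\subseteq S_X$, giving $\le 2^{|S_X|+|S_Y|}\le 4^t$ candidates. Note the reconstruction is not ``remove a few vertices from $A'$ and close up'': removals are controlled indirectly through the $Y$-side hole set $S_Y'$, which is what keeps the universe of choices a size-$2t$ set rather than an arbitrary $O(t)$-subset of $X$. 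This is the piece missing from your proposal, and it is where the $d$-regularity and the threshold $t\le d/(8c)$ are used quantitatively.
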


\begin{proof}
Let $A' := C \cap X$ and $W' := C \cap Y$. 
By the fact that $G$ is $d$-regular, $|E(A',W')|\le d\min\{|A'|,|W'|\}$ and hence $|\nabla(C)|\ge d \max\bigl\{|W'|-|A'|,|A'|-|W'|\bigr\}$. 
By assumption, we have $|\nabla(C)| \leq td$ and therefore $\bigl||W'| - |A'|\bigr|\leq t$.

Set
\begin{align*}
S_X & := \bigl\{v \in X \setminus A' : |N(v)\setminus W'| \le 3ct\bigr\},~\text{and} \\
S_Y &:= \bigl\{v \in  W' : |N(v)\cap A'| \le ct\bigr\},
\end{align*}
so that $S_X\subset X$ consists of vertices in $X\setminus A'$ with almost all of their neighbors in $W'$ and $S_Y\subset Y$ consists of vertices in $W'$ with almost all of their neighbors in $X\setminus A'$.
We have the following claims.
\begin{claim}\label{claim:dense_inSX} For any closed, $t$-contracting subset $A\subset X$ such that $|A \triangle A'| \leq ct$, $A \setminus A' \subseteq S_X$.
\end{claim}
\begin{claimproof}
Suppose for contradiction that there is a vertex $v \in A \setminus A'$ such that 
\[ |N(v)\setminus W'|>3ct. \]
We derive the contradiction using the facts that $|\nabla(A\cap A')|=d|A\cap A'|$ and that any of the edges in $\nabla(A\cap A')$ not incident to $W'$ contribute to the value of the cut $C$. These facts imply that $|E(A\cap A',W')|\ge d|A\cap A'|-t\cdot d$, and hence
\[ |N(A\cap A')\cap W'|\ge |A\cap A'|-t. \]
Then because $A$ is closed and non-expanding,
\begin{align*}
|A|+t &\ge |N(A)| \ge |N((A\cap A')\cup\{v\})| \\&> |N((A\cap A')\cup\{v\})\cap W'| + 3ct \\&\ge |A\cap A'| + 2ct\ge |A| + ct,
\end{align*}
 which is a contradiction because there is a strict inequality in the chain and $c\geq 1$.
\end{claimproof}

\begin{claim}\label{claim:dense_inSY} For any $t$-contracting subset $A\subset X$ such that $|A \triangle A'| \leq ct$, $W' \setminus  N(A \cap A')\subseteq S_Y$.
\end{claim}
\begin{claimproof}
We note that for each vertex $v$ in $W' \setminus N(A \cap A')$, we have that 
\[
N(v) \cap A' \subseteq A' \setminus A.
\]
Since $|A' \setminus A| \leq ct$, it follows that $|N(v)\cap A'|\leq ct $.
\end{claimproof}

We can now complete the proof of the lemma. Using the degree constraints in the definitions of $S_X$ and $S_Y$, we have 
\begin{align*}
td& \geq |\nabla(C)| \\
& \geq  |S_X|(d - 3ct) + |S_Y|(d - ct)
\\&\ge (d/2)\cdot(|S_X|+|S_Y|)
\end{align*}
where the last inequality uses $t < \frac{d}{8c}$. As a result, we have
\[
|S_X| + |S_Y| \le 2t.
\]

Putting Claim~\ref{claim:dense_inSX} and Claim~\ref{claim:dense_inSY} together, we have that each closed $t$-contracting sets $A$ with $|A \triangle A'|,~|N(A) \triangle W'| \leq ct$ must be of the form
\[
A = [(A'\setminus N(S_Y')) \cup S_X']
\]
for some subsets $S_Y' \subseteq S_Y$ and $S_X' \subseteq S_X$. Thus, the total number of such $A$ is at most $2^{|S_X|+|S_Y|} \le 4^{t}$.
Since we are given the cut $C$, $S_X$ and $S_Y$ can be found in time polynomial in $n$ as required.
\end{proof}

With these ingredients we can proof Lemma~\ref{lem:enumeratecontracting}, which we recall states that $\mathcal{A}$ can be enumerated in time $n^{O(1/\delta)} 4^{t_0}$.

\begin{proof}[Proof of Lemma~\ref{lem:enumeratecontracting}]
Since $d=\lfloor\delta n\rfloor$, we construct $\Ccut$ as in Lemma~\ref{lem:enumeratecuts} in time $n^{O(1/\delta)}$. 
We then choose $c=32$ and enumerate for each $C\in \Ccut$, every closed $t_0$-contracting subset $A$ with $|A \triangle (C \cap X)|\leq 32t_0$ and $|N(A) \triangle (C \cap Y)| \leq 32t_0$ using Lemma~\ref{lem:dense_enumerate_nonexp}. 
We are done if every $A\in\mathcal A$ appears in this enumeration process, as the running times combine to give the required $n^{O(1/\delta)} 4^{t_0}$.
This holds because each $A\in\mathcal A$ is closed and $t_0$-contracting and hence setting $S_A=A\cup N(A)$ we have $\nabla(S) \le t_0d$. This is because $d|A|$ edges lie between $A$ and $N(A)$ and $|N(A)|< t_0$. 
So each $A\in\mathcal A$ corresponds to a cut of value at most $t_0d$ 
and hence some $C\in\Ccut$ has $|S_A\triangle C|\le 32t_0$ by Lemma~\ref{lem:enumeratecuts}. 
\end{proof}

\section{Approximating the number of covers}\label{sec:approxDA}

For convenience, we restate Lemma~\ref{lem:approximateDA} here.
\lemapproximateDA*

\begin{proof}
    The method is exactly the same as~\cite[Lem.~17]{JPP22}, but in our setting with $d=\lfloor\delta n\rfloor$ the resulting algorithm runs in time polynomial in $n$.
    
    Let $|A|=a$, $N(A)=W$ have size $|W|=w$, and let $W'=\{v\in W : |N(v)\cap A|\le d/2\}$ have size $|W'|=w'$. 
    Let 
    \[ \mathcal D = \{ B\subset A : N(B)=W~\text{and $B$ is 2-linked}\} \]
    be the set whose size we wish to estimate. 

    By~\cite[Cor.~10]{JPP22}, there is a 2-linked subset $A'\subset A$ of size at most 
    \[ \frac{2a}{d}\log d + \frac{2w}{d} + 2(w-a) \le \frac{2}{\delta}\left(1 + \log n\right) + 2t_0 \]
    such that $N(A')=W$. 
    Then $|\mathcal D| \ge 2^{a-\left(\frac{2}{\delta}\left(1 + \log n\right)+2t_0\right)}$, because any subset of $A$ which contains $A'$ is 2-linked. 
    Now $|\mathcal D|$ can be estimated to relative error $\eps'$ with probability at least $1-\rho$ by sampling 
    \[ \frac{1}{(\eps')^2}\log(1/\rho) n^{O(1/\delta)}4^{t_0} \]
    subsets of $A$ uniformly at random, and this can be proved with a suitable application of the Chernoff bound.
\end{proof}

\section{Enumerative lemmas}\label{sec:containers}

In this section we prove Lemma~\ref{lem:polymers} which states that for $A\in\mathcal{A}$ we have $1\le \Xi_A \le e^{\eps/2}$.

\begin{proof}[Proof of Lemma~\ref{lem:polymers}]
For the proof, we fix an arbitrary $A\in\mathcal A$. 
The terms in the sum giving $\Xi_A$ are non-negative, and the lower bound comes from the term $k=0$ which contributes $1$.
For the upper bound, we use recent results on graph containers and adapt them to our purposes.

Recall that a polymer is a 2-linked subset $B\subset X$ and that the function $\Xi_A$ involves a sum over tuples of non-$t_0$-contracting polymers. 
For convenience, we define $\mathcal{G}(w,t)$ to be the set of $t$-expanding polymers with neighborhood size $w$,
\[
\mathcal{G}(w,t) = \{B \subseteq X,\text{ polymer}:|N(B)| = w, |N(B)| - |[B]| = t\}.
\]
In terms of this notation, we have
\begin{align}
 \Xi_A &= \sum_{k\ge 0}\sum_{\substack{\{B_1,\dotsc,B_k\}\in\mathcal{P}_A\text{ compatible}\\\text{s.t.\ each $B_i$ not $t_0$-contracting}}} 2^{-\sum_{i=1}^k |N(B_i)|}
 \\&\le \sum_{k\ge 0}\frac{1}{k!}\left(\sum_{t\ge t_0}\sum_{w\ge 0}|\mathcal{G}(w,t)|2^{-w}\right)^k,
\end{align}
where we drop the requirement on the tuples of being compatible and relax the requirement that the $B_i$ are subsets of $X_A$ to being subsets of $X$, and hence have an upper bound. 
To proceed, we require upper bounds on $|\mathcal{G}(w,t)|$ and split into two cases according to $t$. The following results are proved in the rest of this section and Appendix~\ref{app:deferred}.

\begin{lemma}\label{lem:smalltcontainers}
There is an absolute constant $\gamma>0$ such that for $t_0\leq t \leq \log^4 n$, and any integer $w$,
\[
|\mathcal{G}(w,t)| \leq 2^{w - \gamma t}.
\]
\end{lemma}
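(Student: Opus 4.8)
The plan is to reduce the count to closures of polymers and then bound those by combining the cut enumeration of Lemma~\ref{lem:enumeratecuts} with a Sapozhenko-style container argument. First I would observe that if $B$ is a polymer with $|N(B)|=w$ and defect $|N(B)|-|[B]|=t$, then $S:=[B]$ is a \emph{closed} $2$-linked set with $N(S)=N(B)$, $|N(S)|=w$ and $|S|=w-t$; here $[B]$ is $2$-linked because every $x\in[B]$ has $\emptyset\neq N(x)\subseteq N(B)$, hence shares a neighbour with some vertex of $B$. Since every polymer with closure $S$ is a subset of $S$, grouping by closure yields
\[
|\mathcal{G}(w,t)|\ \le\ 2^{\,w-t}\cdot N_{\mathrm{cl}}(w,t),
\]
where $N_{\mathrm{cl}}(w,t)$ is the number of closed $2$-linked $S\subseteq X$ with $|N(S)|=w$ and $|N(S)|-|S|=t$. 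Thus it suffices to prove $N_{\mathrm{cl}}(w,t)\le 2^{(1-\gamma)t}$ for an absolute $\gamma>0$; and because $t\ge t_0=C\log(n/\eps)$ with $C=C(\delta)$ large, any bound of the form $N_{\mathrm{cl}}(w,t)\le n^{O(1/\delta)}\cdot 2^{o(t)}$ already gives $N_{\mathrm{cl}}(w,t)\le 2^{t/2}$ and hence $|\mathcal{G}(w,t)|\le 2^{w-t/2}$. That is the target.

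\emph{A cut for every closed set.} Fix an admissible $S$ and set $W=N(S)$. Since $S$ is closed, every edge incident to $S$ lands in $W$, so the cut $S\cup W\subseteq V$ has value exactly $|\nabla(S\cup W)| = e(X\setminus S,W) = d|W|-d|S| = td$. By Lemma~\ref{lem:enumeratecuts} there is some $C'\in\Ccut$ with $|(S\cup W)\triangle C'|\le 32t$ and $|\nabla(C')|\le 33td$, while $|\Ccut|\le n^{O(1/\delta)}$. It therefore suffices to show that, for each fixed $C'$, the number of admissible $S$ whose cut $S\cup W$ lies within Hamming distance $32t$ of $C'$ is at most $2^{o(t)}$.

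\emph{Reconstruction near the cut.} Write $S'=C'\cap X$. A standard double count shows that all but fewer than $2t$ vertices of $W$ have at least $d/2$ neighbours in $S$, and symmetrically all but fewer than $2t$ vertices of $X\setminus S$ have at least $d/2$ neighbours in $Y\setminus W$. Since $|S\triangle S'|\le 32t\ll d$, the ``deep'' vertices of $W$ are pinned down by $S'$ alone: the explicitly computable set $W_\star:=\{y\in Y:|N(y)\cap S'|\ge d/4\}$ satisfies $W_\star\subseteq W$ and $|W\setminus W_\star|<2t$, and then $S=[W]$ is recovered from $W$. Hence, given $C'$, the set $S$ is determined by a residual set of fewer than $2t$ vertices of $Y$. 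The task is to specify this residual set using only $o(t)$ bits, rather than the $\Theta(t\log n)$ bits a one-shot choice of $<2t$ vertices out of $Y$ would cost; I would do this with the Sapozhenko telescoping --- approximate the residual error by a fingerprint drawn from the current $O(t)$-sized error region, enumerate the $2^{o(t)}$-many refinements, and iterate so that the error region shrinks geometrically and the total description cost is $o(t)$. Combined with $|\Ccut|\le n^{O(1/\delta)}$ this gives $N_{\mathrm{cl}}(w,t)\le n^{O(1/\delta)}\cdot 2^{o(t)}$, as wanted. (The complementary regime $t>\log^4 n$, where this localisation is too weak, is handled by a separate and cruder container estimate.)

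The step I expect to be the main obstacle is the last one. The cut family $\Ccut$ coming from subspace enumeration locates $S$ only up to a symmetric difference of size $\Theta(t)$, and a naive description of that difference costs $\Theta(t\log n)$ bits --- far more than the $\gamma t$ we are trying to produce. Making the Sapozhenko-style refinement converge so that the accumulated cost is $o(t)$, while keeping every intermediate family of size $2^{o(t)}$, is where the real work lies; the point of using subspace enumeration is precisely that it replaces what would otherwise be a coarse fingerprint of cost $\Theta(\log^2 n)$ by membership in $\Ccut$, which costs only $O((\log n)/\delta)$ bits.
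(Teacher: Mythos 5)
Your high-level skeleton matches the paper's: Lemma~\ref{lem:enumeratecuts} locates $[A]\cup N(A)$ up to Hamming distance $O(t)$ at cost $n^{O(1/\delta)}$, and a container-style argument must then resolve the remaining $O(t)$-sized uncertainty at cost $2^{o(t)}$, which is absorbed because $t\ge t_0=C(\delta)\log(n/\eps)$. Your preliminary factorization $|\mathcal{G}(w,t)|\le 2^{w-t}\cdot N_{\mathrm{cl}}(w,t)$ by grouping polymers by closure is valid, and is the same accounting the paper performs internally in Lemma~\ref{lem:PARK}, where the $2^{w-t}$ factor reappears as the final choice of $A\subseteq[A]$ once $[A]$ is pinned down. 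Your set $W_\star=\{y:|N(y)\cap S'|\ge d/4\}$ is sound and plays the role of the paper's essential set $W'_g$; compare Claims~\ref{claim:essential1} and~\ref{claim:essential2}.

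The gap is exactly where you flag it, and it is the substance of the lemma. Going from ``$W$ is determined by $W_\star$ together with a residual set of fewer than $2t$ vertices'' to ``the residual set can be specified with $o(t)$ bits'' is the entire content, and ``Sapozhenko telescoping with geometrically shrinking error region'' is neither a proof nor an accurate description of the mechanism. The paper's Lemma~\ref{lem:container} does not shrink the error geometrically: each loop iteration removes at least $\gamma' t$ vertices from an $O(t)$-sized error set, so the loop runs only $O(1/\gamma')$ times; each iteration records one vertex at a cost of $\log n$ bits, for a total container-description cost of $O((\log n)/\gamma')$. This is $o(t)$ solely because $t\ge t_0$ is taken much larger than $\log n$, not because of any multiscale refinement. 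And separately from the container count, Lemma~\ref{lem:reconstruction} must bound the number of polymers per container by $2^{w-\gamma'' t}$; this requires the auxiliary inequality $|S|\le|T|$ and an entropy count over $W\setminus W^*$ and $W^*\setminus W$, none of which is present in your sketch. Without carrying out both pieces---bounding the size of the container family by $n^{O(1/\gamma')}$ and bounding the reconstruction cost per container---your target $N_{\mathrm{cl}}(w,t)\le n^{O(1/\delta)}2^{o(t)}$ remains an aspiration rather than a conclusion.
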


\begin{restatable}{lemma}{lemlargetcontainers}\label{lem:largetcontainers}
There is an absolute constant $\gamma>0$ such that for $t \geq \log^4 n$, and any integer $w$,
\[
|\mathcal{G}(w,t)| \leq 2^{w - \gamma t}.
\]
\end{restatable}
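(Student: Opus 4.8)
The plan is to prove Lemma~\ref{lem:largetcontainers} by the graph container method of Sapozhenko~\cite{S87,S01b}, in the refined form developed for enumeration problems of this type in~\cite{JP20,JPP21,JPP22}; the argument is a direct analogue of the one behind Lemma~\ref{lem:smalltcontainers}, and the role of the hypothesis $t\ge\log^4 n$ is only to make the poly-logarithmic error terms produced by the container lemma negligible. A convenient starting point is the closure: every polymer $B\in\mathcal{G}(w,t)$ has a 2-linked, closed closure $[B]$ with $N([B])=N(B)$ and $|[B]|=w-t$, and the polymers sharing a given closure $A$ are precisely the 2-linked $B\subseteq A$ with $N(B)=N(A)$, of which there are at most $2^{|A|}=2^{w-t}$.

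First I would attach to each $B\in\mathcal{G}(w,t)$ a short fingerprint: a 2-linked $S\subseteq[B]$ with $N(S)$ equal to $N(B)$ up to $O(t)$ vertices, which by the covering argument of~\cite[Cor.~10]{JPP22} together with the density $d=\lfloor\delta n\rfloor$ can be chosen with $|S|=O_\delta(\log n)$. The content of the container lemma is then that $S$, together with $O_\delta(\mathrm{polylog}\, n)$ extra bits recording an approximation of $N(B)$ and of the boundary $N(B)\setminus N(S)$, determines the closure $[B]$ to within $2^{O_\delta(\mathrm{polylog}\, n)}$ possibilities. Since there are only $\binom{n}{O_\delta(\log n)}=2^{O_\delta(\log^2 n)}$ fingerprints, the number of closed sets $A$ that can arise is $2^{O_\delta(\mathrm{polylog}\, n)}$; multiplying by the at most $2^{w-t}$ polymers inside each such closure gives $|\mathcal{G}(w,t)|\le 2^{w-t+O_\delta(\mathrm{polylog}\, n)}$, which is at most $2^{w-\gamma t}$ for a constant $\gamma=\gamma(\delta)>0$ as soon as $t\ge\log^4 n$.

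I expect the main obstacle to be the accounting hidden in the localization step: one must verify that the approximate-neighborhood and boundary data genuinely pin $A=[B]$ down to $2^{O_\delta(\mathrm{polylog}\, n)}$ choices, with no stray factor of $\log n$ leaking into the exponent. A naive ``choose a $\Theta(t)$-element fingerprint $S$ with $N(S)=N(B)$ exactly, then set $A=[S]$'' fails precisely here, because an unrestricted $\Theta(t)$-element subset of $X$ already encodes $\Theta(t\log n)$ bits; the resolution, following~\cite{JP20,JPP21}, is to use \emph{approximate} coverage---which keeps $|S|=O_\delta(\log n)$---and to recover $A$ by sandwiching it between $[S]$ and a slightly larger closed set, the only quantitative check being that the resulting slack is $O_\delta(\mathrm{polylog}\, n)$. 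This is the same $\log^4 n$ scale that governs the $\exp(O(\log^4 n))$ running time of~\cite{JPP22}, which is why it is the natural threshold at which to switch over from Lemma~\ref{lem:smalltcontainers}.
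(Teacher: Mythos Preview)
The paper's proof is far more direct than your sketch: it is a two-line application of \cite[Lemma~4]{JPP22}. For each vertex $v\in X$, that lemma (whose hypothesis the paper verifies by noting $\log^2 d\cdot t/d \le \log^2 n/\delta \ll \log^4 n\le t$ in the dense regime $d=\lfloor\delta n\rfloor$) already gives $|\{B\in\mathcal G(w,t):v\in B\}|\le 2^{w-ct}$ for an absolute constant $c>0$, and a union bound over $v$ yields $|\mathcal G(w,t)|\le n\cdot 2^{w-ct}\le 2^{w-ct/2}$. The threshold $t\ge\log^4 n$ is used only to absorb the factor $n$ from the union bound, not to control container overhead.

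Your plan effectively re-derives \cite[Lemma~4]{JPP22} from scratch, and the particular decomposition you choose---first count closures $[B]$, then multiply by at most $2^{w-t}$ choices of $B$ per closure---has an unsubstantiated step. You assert that the fingerprint $S$ together with $O_\delta(\mathrm{polylog}\,n)$ extra bits pins down $[B]$ to within $2^{O_\delta(\mathrm{polylog}\,n)}$ possibilities, i.e.\ that the number of closed $2$-linked $A$ with $(|N(A)|,|A|)=(w,w-t)$ is $2^{O_\delta(\mathrm{polylog}\,n)}$. But the boundary $N(B)\setminus N(S)$ has size $\Theta(t)$, and when $t$ is far beyond polylogarithmic (e.g.\ $t=\sqrt n$) it is not clear how an $O_\delta(\mathrm{polylog}\,n)$-bit description of it can suffice; neither \cite[Cor.~10]{JPP22} nor the paper's own Lemmas~\ref{lem:container}--\ref{lem:reconstruction} deliver such a closure count. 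The standard Sapozhenko structure instead bounds the number of \emph{containers} by a small factor and the number of polymers per container by $2^{w-\gamma'' t}$, folding the choice of closure and of $B$ into a single reconstruction step---which is exactly what \cite[Lemma~4]{JPP22} packages. Rewriting your argument along those lines would recover that lemma, at which point citing it, as the paper does, is the cleaner route.
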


We prove Lemma~\ref{lem:smalltcontainers} below with some aspects of the container method that are somewhat standard deferred to the appendix. The proof of Lemma~\ref{lem:largetcontainers} is a simple application of a result in~\cite{JPP22} which we give now.

\begin{proof}[Proof of Lemma~\ref{lem:largetcontainers}]
For each $v \in V$, let us define
\[
\mathcal{G}'(v,w,t) = \{A \in \mathcal G(w,t) : v \in A\}.
\]
First, we observe that $\log^2d \cdot \frac{t}{d} \leq \log^2 n\cdot \frac{n}{\delta n} \ll \log^4 n$. Lemma 4 in~\cite{JPP22} gives us that there is a constant $c$ such that for each $v$, $\mathcal{G}'(v,w,t) \leq 2^{w - ct}$. Thus, we have
\[
|\mathcal{G}(w,t)| \leq \sum_{v}|\mathcal{G}'(v,w,t)| \leq n\cdot 2^{n - ct} \leq 2^{n - ct/2}
\]
for $n$ large enough. Setting $\gamma = c/2$ completes the proof.
\end{proof}

With these lemmas in hand, and because each neighborhood size $w$ that we see is in $[1,n]$, there is an absolute constant $\gamma>0$ such that 
\begin{align}
 \Xi_A &\le \sum_{k\ge 0}\frac{1}{k!}\left(\sum_{t\ge t_0}n2^{-\gamma t}\right)^k
 \\&=\sum_{k\ge 0}\frac{1}{k!}\left(n\frac{2^{-\gamma t_0}}{1-2^{-\gamma}}\right)^k = \exp\left(n\frac{2^{-\gamma t_0}}{1-2^{-\gamma}}\right).
\end{align}
This at most the required $e^{\eps/2}$ provided that 
\[ t_0 \ge \frac{1}{\gamma}\log_2\left(\frac{2}{1-2^{-\gamma}}\frac{n}{\eps}\right), \]
which our choice $t_0=C\log(n/\eps)$ satisfies for all large enough constants $C=C(\delta)$.
\end{proof}

The rest of this section is dedicated to the proofs of Lemmas~\ref{lem:smalltcontainers} and~\ref{lem:largetcontainers}.
Note that $\mathcal{G}(w,t)$ is a collection of subsets of $X$ and we are no longer fixing some $A\in\mathcal A$ and focusing on subsets of $X_A$. 

Given a vertex $v\in V$ and a subset $S\subset V$, we write $d_S(v)$ for the number of neighbors of $v$ in $S$. 
For a subset $A\subset X$, we write $W=N(A)$ and $W_s=\bigl\{y\in W : d_A(y)\ge s\bigr\}$. 
We say that $F$ is an \emph{essential set} for $A$ if $W \supseteq F \supseteq W_{d/2}$ and $N(F)\supseteq [A]$. 
It may be useful to consider such an $F$ an approximation for the neighborhood $W=N(A)$. 
We call a tuple $(S,T) \in 2^X \times 2^Y$ a $\gamma'$-container for a $t$-contracting subset $A\subset X$ with neighborhood $W=N(A)$ if
\begin{enumerate}
\item\label{itm:containermembership} $S \supseteq [A]$ and $W_{d/2} \subseteq T \subseteq W$,
\item\label{itm:containerSdegrees}  $d_{Y \setminus T}(v)\leq \gamma' t$ for each $v \in S$, and
\item\label{itm:containerYTdegrees}  $d_{S}(v) \leq \gamma' t$ for each $v \in Y \setminus T$.
\end{enumerate}
The following two results show the existence of containers and bound the number of sets for which a given container is a $\gamma'$-container.

\begin{restatable}{lemma}{lemcontainer}\label{lem:container}
For any $\gamma' > 0$ and any set $F \subseteq Y$, there is a set $\Cind \subseteq 2^X \times 2^Y$ of size at most $n^{O(1/\gamma')}$ such that any $A \subseteq X$ for which $F$ is an essential set, has a $\gamma'$-container in $\Cind$.
\end{restatable}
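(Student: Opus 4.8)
The plan is to build $\Cind$ by a greedy ``fattening'' procedure of the type used in Sapozhenko's container method, parametrized by the fixed essential set $F$. Fix $\gamma' > 0$ and a set $F \subseteq Y$. Given an (unknown) $A \subseteq X$ for which $F$ is essential, we want to approximate the pair $([A], W)$ with $W = N(A)$ by a pair $(S,T)$ drawn from a small, explicitly constructible family. The key structural fact to exploit is that $F$ already pins down $W$ up to the ``light'' vertices: we know $W_{d/2} \subseteq F \subseteq W$ and $N(F) \supseteq [A]$, so $S$ can start life as $N(F)$ (which contains $[A]$) and $T$ can start as $F$ (which contains $W_{d/2}$). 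The problem is that these starting sets may violate conditions~\eqref{itm:containerSdegrees} and~\eqref{itm:containerYTdegrees}: there could be vertices of $N(F)$ with many neighbors outside $F$, or vertices outside $F$ with many neighbors in $N(F)$.

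The core of the argument is to iteratively repair these violations while keeping the number of possible outcomes small. Concretely, run the following loop: while there is a vertex $v \in S$ with $d_{Y \setminus T}(v) > \gamma' t$, move all of $N(v) \setminus T$ into $T$; while there is a vertex $v \in Y \setminus T$ with $d_S(v) > \gamma' t$, move all of $N(v) \cap S$ into $S$ — wait, this is backwards; one instead shrinks. The standard fix is: if $v \in S$ but $v \notin [A]$ and $v$ has too many neighbors outside $T$, then since $[A]$-membership forces $N(v) \subseteq W$, such a $v$ can safely be removed from $S$ only if we are sure $v \notin [A]$; the container method handles this by charging the ``correction set'' of removed/added vertices to a small certificate. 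I would follow the treatment in~\cite{JP20,JPP21,JPP22}: show that each repair step is witnessed by a small set (of size $O(t/\gamma')$ or so) whose choice, together with a constant amount of auxiliary data per step, determines the update, and that the number of steps is bounded (because $|\nabla|$-type potentials decrease). Counting the number of possible transcripts of the loop gives the bound $n^{O(1/\gamma')}$ on $|\Cind|$, and correctness — that the terminal $(S,T)$ satisfies~\eqref{itm:containermembership}–\eqref{itm:containerYTdegrees} for the true $A$ — follows because we only ever remove from $S$ vertices provably not in $[A]$ and only ever add to $T$ vertices provably in $W$ (and dually).

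I would carry out the steps in this order: (1) fix $F$, set $(S,T) = (N(F), F)$, and record the two invariants $[A] \subseteq S$, $W_{d/2} \subseteq T \subseteq W$; (2) describe the repair loop precisely, specifying for each type of violation which vertices move and which small certificate set is recorded; (3) verify the invariants are preserved by each step, using that $F$ is essential and that $A$ is $t$-contracting (so $|N(A)| - |[A]| < t$, keeping the relevant discrepancies of size $O(t)$); (4) bound the total number of steps and the size of each certificate, using a monotone potential such as $|S| + |N(A) \setminus T|$ or a cut-value argument; (5) conclude that the set of all transcripts has size $n^{O(1/\gamma')}$ and that for each $A$ its transcript terminates in a valid $\gamma'$-container.

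The main obstacle will be step~(4): making the potential/termination argument tight enough to get the exponent $O(1/\gamma')$ rather than something worse, and simultaneously ensuring each repair step's certificate has size $O(t/\gamma' \cdot \log n)$ or smaller so that the product over all steps stays $n^{O(1/\gamma')}$ — crucially with $t$ \emph{not} appearing in the exponent of $n$, since the bound in the statement is uniform in $t$ (the $t$-dependence gets absorbed because each certificate vertex has degree $d = \lfloor \delta n \rfloor$ and the cut value is at most $O(td)$, so certificates have size $O(t)$ but the number of them is $O(1/\gamma')$ and each is chosen from a set of size $\mathrm{poly}(n)$ given bounded-size prior data). Getting this bookkeeping right — and correctly handling the ``light'' vertices of $W$ that lie in $W \setminus W_{d/2}$, which is exactly where $F$ gives us no information — is where the real work lies; I would lean heavily on the analogous lemmas in~\cite{JPP22} to structure it.
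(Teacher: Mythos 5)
Your high-level strategy --- a Sapozhenko-style fattening algorithm seeded by $F$, with the size bound coming from counting transcripts of a loop that runs $O(1/\gamma')$ times --- matches the paper's. But two of the specific mechanics you propose would break the argument, and the part you flag as ``where the real work lies'' (the termination bound) is exactly the part the paper has to get right, so it is worth being precise about why your version as stated does not go through.

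First, initializing $S \gets N(F)$ is too coarse. Since $F \supseteq W_{d/2}$ and every vertex of $Y$ has degree $d=\lfloor\delta n\rfloor$, the set $N(F)$ can be essentially all of $X$, so $|S \setminus [A]|$ can be $\Theta(n)$. Each shrinking step removes at most one vertex-neighborhood, i.e.\ at most $d$ vertices, from $S$, so you would need $\Omega(n/d)$ steps in the best case and, because the stopping criterion is a threshold of $\gamma' t$, potentially $\Omega(n/(\gamma' t))$ steps --- far more than $O(1/\gamma')$, giving a transcript count that is not $n^{O(1/\gamma')}$. The paper avoids this by \emph{not} defining $S$ at the start: it first grows $T$ (starting from $F$) until every $v\in[A]$ has $d_{W\setminus T}(v)\le \gamma' t$, and only then sets $S := \{v\in X : d_{Y\setminus T}(v)\le \gamma' t\}$. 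A short edge-counting argument then shows $|S\setminus[A]| = O(t)$, which is what makes the second loop terminate in $O(1/\gamma')$ steps.

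Second, the repair moves you describe do not preserve the required invariants, and this is not just the backwards/shrink confusion you caught yourself on. If you move $N(v)\setminus T$ into $T$ for an arbitrary $v\in S$, you may add vertices of $Y\setminus W$ into $T$ (since $v\in S$ does not imply $N(v)\subseteq W$), destroying $T\subseteq W$, which is needed both for property~\ref{itm:containermembership} and for the termination bound on the first loop (which uses $|W\setminus T| = O(t)$). The paper's algorithm is allowed to \emph{consult $A$} when choosing which vertex drives each step: the first loop picks $v\in[A]$, so $N(v)\subseteq W$ and $T\subseteq W$ is preserved; the second loop picks $v\in Y\setminus W$, so $N(v)\cap[A]=\emptyset$ and $[A]\subseteq S$ is preserved. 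The output is then determined not by $A$ directly but by the sequence of $O(1/\gamma')$ vertices chosen, which is what you count. Your phrasing ``only remove vertices provably not in $[A]$'' suggests you want the moves to be certified from $(S,T)$ and $F$ alone, but then the algorithm's run would be independent of $A$ and could not produce a container for every $A$ sharing the essential set $F$; the resolution is that the choices genuinely depend on $A$, and the counting is over possible choice sequences.

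Finally, the termination argument you defer is short once the above is in place: before the first loop $|W\setminus T|\le |W\setminus W_{d/2}| = O(t)$ (by $t$-contraction and an edge count), and each first-loop step moves at least $\gamma' t$ vertices into $T$; after $S$ is set, $|S\setminus[A]|=O(t)$ by the edge count $td\ge |E(S\setminus[A],T)| \ge (d-\gamma' t)|S\setminus[A]|$, and each second-loop step removes at least $\gamma' t$ vertices from $S$. So both loops run $O(1/\gamma')$ times, giving $|\Cind|\le n^{O(1/\gamma')}$.
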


\begin{restatable}{lemma}{lemreconstruction}\label{lem:reconstruction}
There is an absolute constant $\gamma''>0$ such that the following holds:

For any $\gamma'>0$ any $w<n$, $t < \log^4 n$ and tuple $(S,T)\in \Cind$, there are at most $2^{w - \gamma'' t}$ sets $A \in \mathcal{G}(w,t)$ such that $(S,T)$, and is a $\gamma'$-container for $A$.
\end{restatable}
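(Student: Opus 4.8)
The plan is to show that a $\gamma'$-container $(S,T)$ determines any $A\in\mathcal G(w,t)$ with $(S,T)$ a container for $A$ up to a relatively small number of choices, and that this number is at most $2^{w-\gamma'' t}$. The guiding intuition, standard in the container literature for independent sets (Sapozhenko, and the counting adaptations in~\cite{JP20,JPP21,JPP22}), is that $A$ lies ``between'' $[A]$ and something like $N(T)$, with only a $O(\gamma' t)$-bounded discrepancy on each side, and that $t$-expansion forces $T$ itself to be close to a set of size $w-t$. So the count should be roughly $\binom{?}{O(\gamma' t)}$-type bounds multiplied together, which one wants to be $2^{O(\gamma' t\log n)}$ at worst, and then beat $2^{\gamma'' t}$ by a different mechanism. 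The cleaner route, and the one I would take, is to reconstruct $N(A)$ first and then $A$.

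\textbf{Step 1: pin down $N(A)$ from $T$.} Since $W_{d/2}\subseteq T\subseteq W$ and $|W|=w$, we have $T\subseteq W$ with $|W\setminus T|$ vertices each of degree $<d/2$ into $A$; I would argue that these ``light'' vertices of $W$ are few. Here the $t$-expansion hypothesis $|N(A)|-|[A]|=t$ enters: counting edges between $[A]$ and $W$ versus edges between $A$ and $W$, and using $d$-regularity, one bounds $w - |[A]| = t$ and also controls $|W\setminus W_{d/2}|$. Concretely, each $y\in W$ contributes $d$ edges to $A$ in total but only $d_A(y)$ land on... — more precisely, $\sum_{y\in W} d_A(y) = d|A|$ and $|A|\ge |[A]| = w-t$, so the number of $y$ with $d_A(y)$ small is $O(t)$ (the slack is $d|A|$ vs.\ $dw$, a deficit of only $dt$ plus the gap between $|A|$ and $|[A]|$ which is itself small by a similar edge count). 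Hence $W$ is determined by $T$ up to adding a set of size $O(t)$ from a restricted pool; combined with condition~\eqref{itm:containerYTdegrees} (every $v\in Y\setminus T$ has $d_S(v)\le\gamma' t$, so vertices of $W\setminus T$ must have almost all neighbors outside $S\supseteq[A]$), this localizes the candidates for $W\setminus T$ to a set of size $O(\gamma' t)$, giving at most $2^{O(\gamma' t)}$ choices for $W$.

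\textbf{Step 2: pin down $A$ from $(S,W)$.} Once $W=N(A)$ is known, $[A]=N^{\mathrm c}$-type closure is $\{x\in X: N(x)\subseteq W\}$, which is \emph{determined}. Then $A$ satisfies $[A]\subseteq S$ and $[A]\subseteq A$, and $A\subseteq N(W)$ trivially; the subtlety is that $A$ need not be closed. But $A\setminus[A]$ consists of vertices $x$ with $N(x)\subseteq W$ — wait, that would put them in $[A]$; rather $A$ is an arbitrary polymer with closure having the stated neighborhood, so $A$ ranges over $2^{[A]'}$ for the relevant closed set, up to the 2-linked constraint. The count of such $A$ is $2^{|[A]|}=2^{w-t}$ at most (subsets of a set of size $|[A]| = w-t$, since all of $A$'s ``freedom'' is within vertices whose neighborhood is already inside $W$), and this is exactly where the saving $2^{-t}$ relative to the target $2^w$ comes from. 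Multiplying Step 1's $2^{O(\gamma' t)}$ by Step 2's $2^{w-t}$ gives $2^{w-t+O(\gamma' t)} \le 2^{w-\gamma'' t}$ provided $\gamma''$ is chosen with $\gamma'' < 1 - O(\gamma')$; since $\gamma''$ is allowed to be an absolute constant independent of $\gamma'$, I would instead be more careful and make Step 1's bound $2^{o(t)}$ or at least with a small enough implied constant — using condition~\eqref{itm:containerSdegrees} as well to cut down choices — so that the dominant $2^{-t}$ survives. The regime restriction $t<\log^4 n$ and $w<n$ is used to absorb lower-order $\mathrm{poly}\log$ and $\log n$ factors into the exponent $\gamma'' t$.

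\textbf{Main obstacle.} The delicate point is getting the combinatorial ``correction'' terms in Step 1 down to $2^{\gamma'' t}$ with an \emph{absolute} constant $\gamma''$ that does not secretly depend on $\gamma'$, rather than merely $2^{O(\gamma' t \log n)}$; naively the candidates for $W\setminus T$ number up to $\binom{n}{O(\gamma' t)}$, which is too big. The fix is to use condition~\eqref{itm:containerYTdegrees} to confine $W\setminus T$ to vertices with $\le\gamma' t$ neighbors in $S\supseteq[A]$, while a counting argument shows there are only $O(t)$ such vertices in $W$ at all (because $W\setminus W_{d/2}$ is small by the edge count, and $W_{d/2}\subseteq T$), so the pool itself has size $O(t)$ and we get $2^{O(t)}$ — still potentially with too large a constant. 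I would therefore present the bound as $2^{w-t}\cdot 2^{O(\gamma' t)}$ and note that since the statement only asks for \emph{some} absolute $\gamma''>0$, it suffices to additionally assume (as we may, choosing $\gamma'$ in the application of Lemma~\ref{lem:container} appropriately) that $\gamma'$ is below a fixed threshold, after which $\gamma'':=1/2$ works; if the paper genuinely needs $\gamma''$ independent of $\gamma'$, the honest route is the sharper edge-counting that shows the free vertices on the $Y$-side number $O(t)$ with a \emph{universal} constant, which is the technical heart I would expect to spend the most effort on.
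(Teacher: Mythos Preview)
Your Step~2 is correct and matches the paper: once $W=N(A)$ is known, $[A]$ is determined and $A\subseteq[A]$ gives at most $2^{|[A]|}=2^{w-t}$ choices. The gap is in Step~1. Your ``localization'' of $W\setminus T$ to a pool of size $O(\gamma' t)$ or $O(t)$ does not follow from the container axioms. Condition~\ref{itm:containerYTdegrees} says that \emph{every} $v\in Y\setminus T$ has $d_S(v)\le\gamma' t$, so it places no restriction whatsoever on which vertices of $Y\setminus T$ can belong to $W\setminus T$; the pool is all of $Y\setminus T$, which may have size $\Theta(n)$. Your fallback argument that ``there are only $O(t)$ such vertices in $W$'' is circular, since $W$ is what you are trying to determine. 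Even granting the (correct) bound $|W\setminus T|=O(t)$, you would be left with $\binom{|Y\setminus T|}{O(t)}\approx n^{O(t)}$ choices for $W$, which swamps the $2^{-t}$ saving from Step~2. Shrinking $\gamma'$ does not help: the pool size is not controlled by $\gamma'$ at all.

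The paper closes this gap with two ideas you are missing. First, a short lemma shows that any $\gamma'$-container $(S,T)$ for some $A\in\mathcal G(w,t)$ satisfies $|S|\le|T|$ (this is where $t<\log^4 n$ and $d=\Theta(n)$ are used). Second, one splits on whether $|S|<w-\alpha t$. If so, $A\subseteq[A]\subseteq S$ already gives $\le 2^{w-\alpha t}$ choices. If not, one fixes a single \emph{reference} $A^*\in\mathcal G(w,t)$ with the same container and encodes $W$ via $W\triangle W^*$: one has $W^*\setminus W\subseteq W^*\setminus T$, of size $\le w-|T|\le w-|S|\le\alpha t$, and each $y\in W\setminus W^*$ is witnessed by a neighbor in $S\setminus[A^*]$, a set of size $\le|S|-(w-t)\le t$; since $|W\setminus W^*|\le\alpha t$ too, this costs only $2^{\alpha t}\binom{t}{\alpha t}\le 2^{(\alpha+H(\alpha))t}$. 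Choosing $\alpha$ with $\alpha+H(\alpha)<1$ (e.g.\ $\alpha=0.17$) then gives an absolute $\gamma''>0$ independent of $\gamma'$. The reference-set trick is precisely what lets one count $W$ from inside a known bounded set rather than from all of $Y\setminus T$, and it is the idea your proposal lacks.
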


Since the proofs of these results are small modifications of existing container results, e.g.~\cite{PARK22}, we defer their proofs to Appendix~\ref{app:deferred}. 
We are now ready to handle the case of small $t$ and prove Lemma~\ref{lem:smalltcontainers}. We leave the proof of Lemma~\ref{lem:largetcontainers} to Appendix~\ref{app:deferred}.

\begin{proof}[Proof of Lemma~\ref{lem:smalltcontainers}]
Consider an integer $t \in [t_0,\log^4 n]$ and a set $A\in \mathcal{G}(w,t)$. Define $L := [A] \cup N(A)$. The cut $L$ has value at most $td$. 
By Lemma~\ref{lem:enumeratecuts}, there is a cut $L'\in\Ccut$ such that $g := |L \triangle L'| \leq O(t)$. Let $A' := L' \cap X$ and $W' := L' \cap Y$. 

Consider the set $W'_g = \{u \in Y : d_{A'}(u)> g\}$. We have the following two claims.

\begin{claim}\label{claim:essential1}
$W \supseteq W'_g \supseteq \{u \in W : d_{A'}(u) \geq d/2\}$.
\end{claim}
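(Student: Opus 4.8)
The plan is to prove the two containments in Claim~\ref{claim:essential1} separately, both as quick consequences of the Hamming bound $g = |L\triangle L'|\le 32t$ guaranteed by Lemma~\ref{lem:enumeratecuts} together with the standing hypothesis $t\le\log^4 n$. Throughout I may assume $n$ is larger than a constant depending on $\delta$, since for smaller $n$ the quantity $i(G)$ is computed exactly by brute force (and in any case the range $[t_0,\log^4 n]$ is empty, so the lemma being applied is vacuous, unless $n$ is large).

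For the inclusion $W'_g\supseteq\{u\in W: d_{A'}(u)\ge d/2\}$, the only point is a crude size comparison: $g\le 32t\le 32\log^4 n < d/2$ once $n$ is large, because $d=\lfloor\delta n\rfloor$ grows linearly in $n$ while $g$ is at most polylogarithmic. Hence any $u\in W$ with $d_{A'}(u)\ge d/2$ satisfies $d_{A'}(u)\ge d/2 > g$, so $u\in W'_g$ by the definition $W'_g = \{u\in Y : d_{A'}(u) > g\}$.

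For the inclusion $W'_g\subseteq W$, I would first record that, since $[A]\subseteq X$ and $N(A)\subseteq Y$, we have $L\cap X=[A]$ and $L'\cap X=A'$, so $|A'\setminus[A]|\le|[A]\triangle A'|\le|L\triangle L'|=g$. Now suppose $u\notin W=N(A)$. Since $N([A])=N(A)$ (as $A\subseteq[A]$ and every $x\in[A]$ has $N(x)\subseteq N(A)$), the vertex $u$ has no neighbor in $[A]$, so all of its neighbors in $A'$ lie in $A'\setminus[A]$; thus $d_{A'}(u)\le|A'\setminus[A]|\le g$ and $u\notin W'_g$. Taking the contrapositive gives $W'_g\subseteq W$, completing the proof.

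There is no genuine obstacle here; the only two steps that warrant a moment's care are the numerical inequality $g<d/2$, which is available precisely because we are in the dense regime with $d=\Theta(n)$ while $t$ (and hence $g$) is at most polylogarithmic, and the identification $L\cap X=[A]$, which is what converts the cut-approximation bound of Lemma~\ref{lem:enumeratecuts} into control on $|A'\setminus[A]|$.
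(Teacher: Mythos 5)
Your proof is correct for the claim as literally stated, and in fact your handling of the inclusion $W'_g\subseteq W$ is slightly more careful than the paper's: you correctly identify $L\cap X=[A]$ (not $A$), so the Hamming bound controls $|A'\setminus[A]|$, and you use $N([A])=N(A)$ to pass between $A$ and $[A]$. The paper writes $|A'\setminus A|\le|L\triangle L'|$, which is only valid because $A\subseteq[A]$ — a step you made explicit.

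However, the first containment collapsing to the one-line observation $d/2>g$ is a signal worth noticing: the claim almost certainly contains a typo, and the inner set should be $\{u\in W : d_A(u)\ge d/2\}$ rather than $\{u\in W : d_{A'}(u)\ge d/2\}$. The purpose of Claims~\ref{claim:essential1} and~\ref{claim:essential2} is to show $W'_g$ is an essential set for $A$, and the definition of essential set requires $F\supseteq W_{d/2}$, where $W_{d/2}$ is defined via $d_A$, not $d_{A'}$. The paper's own proof of this part begins ``Consider a vertex $u\in W$ such that $d_A(u)\ge d/2$,'' confirming the intent. With the corrected statement the step is no longer trivial — one needs
\[
d_{A'}(u)\;\ge\; d_A(u)-|A\setminus A'|\;\ge\; d/2 - g \;>\; g,
\]
using $|A\setminus A'|\le|[A]\setminus A'|\le|L\triangle L'|=g$ (the mirror of the bound you already established for the reverse inclusion). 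So your argument has all the right ingredients; it simply never gets asked the question that the lemma actually needs answered, because the statement you were handed asks a weaker one.
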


\begin{claimproof}
Consider a vertex $u \in W$ such that $d_A(u)\geq d/2$. We have
\[
d_{A'}(u)  \geq d_A(u) -  |A\setminus A'| \geq d/2 - |L \triangle L'| \geq d/2 - g > g,
\]
and hence $u \in W'_g$. Moreover, consider a vertex $u \in W'_g$. We have
\[
d_A(u) \geq d_{A'}(u) - |A' \setminus A| > g - |L \triangle L'| > 0,
\]
and hence $u \in W$.
\end{claimproof}

\begin{claim}\label{claim:essential2}
$A\subseteq N(W'_g)$.
\end{claim}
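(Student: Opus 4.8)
The plan is to prove the (formally stronger) statement that every $v\in[A]$ --- and in particular every $v\in A$ --- has a neighbour in $W'_g$. The idea is simple: $v$ has exactly $d$ neighbours and all of them lie in $W=N(A)$, whereas only $O(t)$ vertices of $W$ can fail to belong to $W'_g$; since $t\le\log^4 n$ is negligible next to $d=\lfloor\delta n\rfloor$, at least one neighbour of $v$ survives in $W'_g$.

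The quantitative heart of the argument is a bound on the number of ``light'' vertices $W_{\mathrm{light}}:=\{u\in W:d_{A'}(u)<d/2\}$. Every $x\in[A]$ has all of its $d$ neighbours inside $N(A)=W$, so $|E([A]\cap A',W)|=d\,|[A]\cap A'|\ge d(|[A]|-g)$, using $|[A]\triangle A'|\le|L\triangle L'|=g$; since $[A]\cap A'\subseteq A'$ this gives $\sum_{u\in W}d_{A'}(u)=|E(A',W)|\ge d(|[A]|-g)$. On the other hand, by the definition of $W_{\mathrm{light}}$ we have $\sum_{u\in W}d_{A'}(u)<(d/2)|W_{\mathrm{light}}|+d(|W|-|W_{\mathrm{light}}|)$. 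Combining these and using that $A\in\mathcal{G}(w,t)$ is $t$-expanding, i.e.\ $|W|=|[A]|+t$, yields
\[ (d/2)\,|W_{\mathrm{light}}| < d\bigl(|W|-|[A]|+g\bigr) = d(t+g), \]
so $|W_{\mathrm{light}}|<2(t+g)\le 66t$, the last step using $g\le 32t$ from Lemma~\ref{lem:enumeratecuts}.

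To finish, fix $v\in[A]$. Then $N(v)\subseteq N(A)=W$ and $|N(v)|=d$, and since $d=\lfloor\delta n\rfloor$ while $t\le\log^4 n$, for all sufficiently large $n$ we have $|W_{\mathrm{light}}|<66t<d$. Hence $v$ has a neighbour $u\in W\setminus W_{\mathrm{light}}$, that is, one with $d_{A'}(u)\ge d/2$, and by Claim~\ref{claim:essential1} any such $u$ lies in $W'_g$. Thus $v\in N(u)\subseteq N(W'_g)$; as $v$ was arbitrary, $[A]\subseteq N(W'_g)$, and in particular $A\subseteq N(W'_g)$.

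The step demanding the most care --- and essentially the only place there is anything to check --- is the edge count: one must work with $[A]\cap A'$ rather than $A'$ (so that every vertex counted genuinely sends all $d$ of its edges into $W$) and keep track of the $O(t)$ discrepancies between $A$, $[A]$ and $A'$ coming from $g=|L\triangle L'|$. Given these, the claim falls out of $d$-regularity, the $t$-expansion identity $|W|=|[A]|+t$, and the fact that $t$ is tiny compared with $d$.
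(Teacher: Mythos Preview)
Your proof is correct, but it takes a somewhat different route from the paper's.

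The paper argues by contradiction and \emph{locally}: it supposes some $u\in A$ has no neighbour in $W'_g$, so every $v\in N(u)$ satisfies $d_{A'}(v)\le g$; adding the discrepancy $|[A]\setminus A'|\le g$ gives $d_{[A]}(v)\le 2g$ for all $v\in N(u)$, and hence the $d$ vertices of $N(u)\subseteq W$ each send at least $d-2g$ edges to $X\setminus[A]$, forcing $|E(W,X\setminus[A])|\ge d(d-2g)$. Since $|E(W,X\setminus[A])|=td$ by the $t$-expansion identity, this contradicts $t,g=O(\log^4 n)$ against $d=\Theta(n)$.

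You instead argue \emph{globally}: you bound the total number of ``light'' vertices $W_{\mathrm{light}}=\{u\in W:d_{A'}(u)<d/2\}$ by a double-count of $|E(A',W)|$, obtaining $|W_{\mathrm{light}}|\le 2(t+g)<d$, and then observe that any $v\in[A]$ has all $d$ of its neighbours in $W$ and so must hit $W\setminus W_{\mathrm{light}}\subseteq W'_g$. This is a clean packaging that in fact yields the slightly stronger $[A]\subseteq N(W'_g)$ with a reusable intermediate lemma (the bound on $|W_{\mathrm{light}}|$). The paper's version is more direct and avoids the global count, but both hinge on exactly the same ingredients: $d$-regularity, $|W|-|[A]|=t$, $|[A]\triangle A'|\le g$, and $t,g\ll d$. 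One small cosmetic point: your strict inequality in the upper bound $\sum_{u\in W}d_{A'}(u)<(d/2)|W_{\mathrm{light}}|+d(|W|-|W_{\mathrm{light}}|)$ should be non-strict (equality is possible if $W_{\mathrm{light}}=\emptyset$), but this does not affect the conclusion.
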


\begin{claimproof}
Suppose otherwise, i.e.\ there is a vertex $u\in A$ such that for each vertex $v \in N(u)$ we have $d_{A'}(u) \leq g$. For any such $v$, we have 
\[
d_A(v) \leq d_{A'}(v) + |A \setminus A'| \leq d_{A'}(v) + |L \triangle L'| \leq 2g.
\]
This gives us that 
\[
t \cdot d = |E(W,X \setminus A)| \geq |E(N(u), W\setminus A)| \geq d(d - 2g),
\]
contradicting the assumptions that $d=\lfloor\delta n\rfloor$ and $t$ and $g$ are both $O(\log^4 n)$.
\end{claimproof}

Claims~\ref{claim:essential1} and~\ref{claim:essential2} show that $W'_g$ is an essential set for $A$. 
The set $A\in\mathcal{G}(w,t)$ may be constructed by
\begin{enumerate}
\item choosing the appropriate cut $L'$ in the set $\Ccut$ constructed in Lemma~\ref{lem:enumeratecuts},
\item constructing the essential subset $W'_g$ for it as above,
\item using Lemma~\ref{lem:container} to obtain a $\gamma'$-container of $A$, where $\gamma'$ is the absolute constant of Lemma~\ref{lem:reconstruction}, and finally
\item reconstructing $A$ from the $\gamma'$-container with Lemma~\ref{lem:reconstruction}.
\end{enumerate}
There are $n^{O(1/\delta)}$ choices for $L'$ in the first step, a unique construction of $W'_g$ for the second, $n^{O(1/\gamma')}$ possible containers in the third step, and $2^{w - \gamma'' t}$ ways for the final step. In total there are 
\[
2^{w - \gamma'' t +O(1/\gamma' + 1/\delta)\log n} \le 2^{w - \gamma''t/2}
\]
such sets $A\in\mathcal{G}(w,t)$.
The last inequality comes from our assumption that $t \geq t_0$ for our choice of $t_0=C(\delta)\log(n/\eps) \ge C\log(n)$ (because wlog $\eps\le 1$) satisfying 
\[ t_0 \geq \Omega\left(\frac{\log n}{\gamma''}\left(\frac{1}{\gamma'} + \frac{1}{\delta}\right)\right).\qedhere \]
\end{proof}

\section{Concluding remarks and future directions}
\begin{enumerate}[wide, labelwidth=!, labelindent=0pt]
\item[1.] Naturally, a next goal is to understand the power and limitations of the methods presented, especially in conjunction with existing cluster expansion methods. More specifically, we are curious about the following two questions: 
\begin{itemize}
\item[i.] Can this spectral point of view help with our understanding of independent sets in a larger class of bipartite graphs?
\item[ii.] To what extent do these methods help in reducing the computation needed to implement algorithmic cluster expansion? 
\end{itemize}
In this context, the problem of approximating the number of independent sets in \emph{small-set expanders} feels within striking distance.\\
\item[2.] Our next remark concerns Lemma~\ref{lem:enumeratecuts}. As mentioned before, similar results have had other applications in optimization and Unique Games~\cite{ABS10,Kol10,KT07}, though we take a subtly different viewpoint worth noting: we seek to approximate \emph{all} cuts in the graph, not just small ones. In any case, we find the lemma interesting in its own right and conjecture something stronger.
\begin{conjecture} Lemma~\ref{lem:enumeratecuts} holds with $|\Cind| \leq 2^{O(1/\delta)}$.
\end{conjecture}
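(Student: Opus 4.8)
The natural plan is to retain the $k$-dimensional spectral subspace $U$ from the proof of Lemma~\ref{lem:enumeratecuts} (recall $k\le 4n/d=O(1/\delta)$) but to replace the crude $\eps$-net of the radius-$\sqrt n$ ball in $U$, of size $n^{O(k)}$, by a net adapted to the geometry of cut-indicator projections. First reduce to the regime $t\le n/16$: for larger $t$ the conclusion of Lemma~\ref{lem:enumeratecuts} is vacuous and is witnessed by $\emptyset$ and $V$, which we place in $\Ccut$. For $t\le n/16$ and a cut $S$ with $|\nabla(S)|\le td$, the computation in the proof of Lemma~\ref{lem:enumeratecuts} gives $\|\mathbf 1_S-\mathbf u\|\le\sqrt{2t}$ for $\mathbf u:=\mathrm{proj}_U(\mathbf 1_S)$, so it suffices to construct $\mathcal E\subset U$ with $|\mathcal E|\le 2^{O(k)}=2^{O(1/\delta)}$ such that every such $\mathbf u$ lies within $O(\sqrt t)$ of a point of $\mathcal E$, uniformly over the scale $t$ (the ball radius being allowed to scale with $\sqrt t$). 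The remainder of the argument---rounding the chosen net point to its nearest Boolean vector and applying the triangle inequality as before---then goes through with the same constants up to adjusting $32$ and $33$.

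The crux is thus a structural claim: the set of cut-relevant points $\{\mathrm{proj}_U(\mathbf 1_S):|\nabla(S)|\le td\}\subset U\cong\RR^k$ is not spread uniformly through the radius-$\sqrt n$ ball but lies in a union of $2^{O(k)}$ balls of radius $O(\sqrt t)$. I would try to establish this via spectral clustering of the embedding $F\colon V\to\RR^k$, $F(v)=(\mathbf e_1(v),\dots,\mathbf e_k(v))$, which is isotropic, $\sum_v F(v)F(v)^{\transpose}=I_k$ and $\sum_v\|F(v)\|^2=k$, and for which the hypothesis $\mu_{k+1}>d/2$ forces $\|\sum_{v\in S}F(v)\|^2=\|\mathbf u\|^2>|S|-2t$ whenever $|\nabla(S)|\le td$. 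In the style of higher-order Cheeger inequalities and spectral partitioning of well-clustered graphs one would argue that $V$ admits a partition into $O(k)$ clusters such that every cut with $|\nabla(S)|\le td$ agrees with a union of clusters up to $O(t)$ vertices; the $2^{O(k)}$ cluster-unions, thickened to $O(\sqrt t)$-balls, then furnish $\mathcal E$. A complementary, more combinatorial route is a greedy packing argument: iteratively add to $\Ccut$ the ``core'' of any small cut not yet $32t$-covered, and bound the number of iterations by showing one cannot have $2^{\omega(k)}$ small cuts that are pairwise far in symmetric difference, since their $U$-projections would then be $\Omega(\sqrt t)$-separated points all carrying the same clustering signature. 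One should also exploit the bipartite $\pm$-symmetry, under which $U$ is paired with the top eigenspace and $\|\mathrm{proj}_{U}(\mathbf 1_{S\cap X})-\mathrm{proj}_{U}(\mathbf 1_{S\cap Y})\|=O(\sqrt t)$ for small cuts.

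I expect two main obstacles. First, the only spectral gap available is the weak one $\mu_{k+1}>d/2$, with no guaranteed gap below level $k$, so the off-the-shelf clustering theorems---which typically require a large ratio $\mu_{k+1}/\rho_k$ or a small $\mu_k$---do not apply directly; one must extract an $O(k)$-cluster structure from this weak hypothesis, presumably leaning on the fact that in the dense regime $k=O(1/\delta)$ is an absolute constant. Second, there is the question of uniformity in the scale $t$: treating the dyadic scales $t=1,2,4,\dots,n/16$ separately costs an extra $\log n$ factor, so to reach a clean $2^{O(1/\delta)}$ bound one needs the clusterings at different scales to be nested---coarser at larger $t$---so that the refinement tree of clusters across all scales still has only $2^{O(k)}$ leaves. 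Removing this $\log n$ is, I believe, the main difficulty, and it should reduce to the same packing estimate that bounds the number of cluster types at the finest scale.
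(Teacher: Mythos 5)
Note first that the paper does not prove this statement: it is posed explicitly as an open conjecture, accompanied only by the tightness example of $1/\delta$ disjoint copies of $K_{d,d}$. There is therefore no proof in the paper to compare against, and your write-up should be read as a research plan rather than a proof.

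Your preliminary reductions are correct: for $t\gtrsim n/16$ the conclusion of Lemma~\ref{lem:enumeratecuts} is trivial (any cut, e.g.\ $\emptyset$, witnesses it), and for smaller $t$ the estimate $\|\mathbf{1}_S-\mathrm{proj}_U\mathbf{1}_S\|\le\sqrt{2t}$ from the paper's proof carries over. The intended architecture---keep the spectral subspace $U$ with $\dim U=k=O(1/\delta)$ but replace the radius-$\sqrt n$ $\eps$-net of size $n^{O(k)}$ by a structured covering of size $2^{O(k)}$---is exactly the shape a proof would have to take. But both routes you sketch bottom out in the same unestablished claim, and you acknowledge as much. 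The spectral-clustering route needs a statement of the form ``$V$ admits a partition into $O(k)$ clusters such that every cut of value at most $td$ agrees with a union of clusters up to $O(t)$ vertices, simultaneously for all $t$.'' No higher-order Cheeger or well-clusteredness theorem in the literature delivers this from the bare hypothesis $\mu_{k+1}>d/2$; you note this, and it is the entire content of the conjecture. The greedy-packing route has the same gap but disguised: $\Omega(\sqrt t)$-separated points in a ball of radius $\sqrt n$ in a $k$-dimensional space can number $(n/t)^{\Theta(k)}$ by the usual volume bound, which for $t=O(1)$ is again $n^{\Theta(1/\delta)}$, not $2^{O(k)}$. To beat the volume bound one must exploit the further constraint that the relevant projections are near-Boolean indicator vectors of small cuts---i.e.\ precisely the clustering structure---so invoking an unspecified ``clustering signature'' to rule out $2^{\omega(k)}$ well-separated small cuts is circular. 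The scale-uniformity issue you raise is also real, but secondary: the $\log n$ loss from dyadic scales is qualitatively the same defect as the $(n/t)^{\Theta(k)}$ loss at a single scale, and both would be cured by the same clustering statement. In short, you have identified the right shape of a proof and correctly located the bottleneck, but you have not closed it; the conjecture remains open.
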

If true, this would be best possible, as evidenced by a disjoint union of $1/\delta$ components. Setting $t = 0$ in this case gives exactly $2^{1/\delta}$ cuts of size $0$.\\
\item[3.] Finally, we leave open the problem of making our algorithm deterministic. At the moment, the only step where randomness is used is Lemma~\ref{lem:approximateDA}. 
\end{enumerate}

\bibliographystyle{habbrv}
\bibliography{main}

\appendix
\section{Deferred proofs}\label{app:deferred}

\subsection{Proof of Lemma~\texorpdfstring{\ref{lem:container}}{\ref*{lem:container}}}

We restate the result for convenience. 

\lemcontainer*
\begin{proof}
Let $A\subset X$ be a subset for which $F$ is an essential set and let $W=N(A)$, $t := |N(A)| - |[A]|$. Consider the following algorithm
\begin{itemize}
\item[] initialize $T \gets F$
\item[] while $\exists~v \in [A]$ s.t.\ $d_{W \setminus T}(v) > \gamma' t$, pick such a $v$:
\begin{itemize}
\item[] $T \gets T \cup N(v)$
\end{itemize}
\item[] initialize $S \gets \{v \in X : d_{Y \setminus T}(v) \leq \gamma' t\}$
\item[] while $\exists~v \in Y\setminus W$ s.t. $d_{S}(v) > \gamma' t$, pick such a $v$:
\begin{itemize}
\item[] $S \gets S \setminus N(v)$
\end{itemize}
\item[] $T \gets T\cup \{v \in Y : d_S(v) > \gamma' t\}$
\item[] return $(S,T)$
\end{itemize}

The lemma follows provided we can show that $(S,T)$ as given by the algorithm above is a $\gamma'$-container for $A$ by establishing properties~\ref{itm:containermembership}--\ref{itm:containerYTdegrees}, and provided we can show a good enough bound on the total number of outputs $(S,T)$ which can occur for a fixed $F$ as $A$ varies.

To prove that the output $(S,T)$ is a $\gamma'$-container of $A$, we first show that $S \supseteq [A]$ and $W_{d/2} \subseteq T \subseteq W$, establishing~\ref{itm:containermembership}. 
Since $F$ is an essential subset for $A$, we initialize $T\gets F$, and $T$ can then only grow, we have $W_{d/2}\subset T$. Clearly, $T \subseteq W$ at the end of the first while loop.
After the second initialize statement, we have that each vertex $v \in [A]$ satisfies $d_{W \setminus T}(v) \leq d_{Y \setminus T}(v) \leq \gamma' t$. Therefore, $S \supseteq A$ at the end of this line. This property is maintained during the second while loop since we only delete $N(v)$ from $S$ for $v \not\in W$. This also means that in the penultimate line, all vertices added to $T$ are from $W$. Thus $T \subseteq W$ is also maintained at the end of the algorithm.
Next, we prove~\ref{itm:containerYTdegrees}. 
At the beginning of the second loop, every $v \in S$ satisfies $d_{Y \setminus T}(v)\leq \gamma' t$. Since vertices are only removed from $S$ and added to $T$ after this point, this property is preserved till the end. Finally, to prove~\ref{itm:containerSdegrees} note that the penultimate line of the algorithm ensures that every $v \in Y \setminus T$ satisfies $d_S(v) \leq \gamma' t$. 

To bound the number of possible outputs for a fixed $F$, note that before the start of the first loop we have $|W \setminus T| \leq O(t)$. Each step in the first loop of the algorithm removes $\gamma t$ vertices from $W \setminus T$. 
Therefore, this loop runs at most $O(1/\gamma')$ times. Next, each step in the second loop removes at least $\gamma' t$ vertices from $S \setminus [A]$. Immediately after the second initialize statement, we have
\[
dt\geq |E(S \setminus [A], T)| \geq (d - \gamma' t)|S \setminus [A]|.
\]
As a result, $|S \setminus [A]| = O(t)$. So the second loop runs for at most $1/\gamma'$ steps. The output is determined by the set of $O(1/\gamma')$ vertices chosen in both loops, so the number of possible outputs for the algorithm for a given $F$ is at most $n^{O(1/\gamma')}$.
\end{proof}

\subsection{Proof of Lemma~\texorpdfstring{\ref{lem:reconstruction}}{\ref*{lem:reconstruction}}}

We restate the result for convenience. 
\lemreconstruction*

We need the following lemma
\begin{lemma}
Let $(S,T)$ be a $\gamma'$-container for a set $A \in \mathcal{G}(w,t)$. Then $|S| \leq |T|$.
\end{lemma}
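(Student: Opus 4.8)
The plan is to compare the two sides via double counting of the edge set $E(S,Y\setminus T)$, or rather the bipartite edge count between $S$ and $T$ together with the container degree bounds. The key structural facts are: every $v\in S$ has $d_{Y\setminus T}(v)\le \gamma' t$ by property~\ref{itm:containerSdegrees}, hence $d_T(v)\ge d-\gamma' t$; and every $v\in Y\setminus T$ has $d_S(v)\le \gamma' t$ by property~\ref{itm:containerYTdegrees}. Counting edges between $S$ and $T$ from the $S$ side gives $|E(S,T)|\ge (d-\gamma' t)|S|$, while counting from the $T$ side gives $|E(S,T)|\le d|T|$. Combining, $|S|\le \frac{d}{d-\gamma' t}|T|$, which for $t<\log^4 n$ and $d=\lfloor\delta n\rfloor$ is very close to $|T|$ but not quite the claimed inequality on the nose.

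To get the sharp bound $|S|\le|T|$ I would instead relate both quantities to $w=|N(A)|$ and use the expansion/contraction parameter $t=|N(A)|-|[A]|$ more carefully. Since $(S,T)$ is a container, $[A]\subseteq S$ and $W_{d/2}\subseteq T\subseteq W=N(A)$, so $|T|\le w$. On the other side, the argument from the proof of Lemma~\ref{lem:container} shows $|S\setminus[A]|=O(t)$: immediately after $S$ is initialized, $dt\ge |E(S\setminus[A],T)|\ge(d-\gamma' t)|S\setminus[A]|$, so $|S\setminus[A]|\le \frac{dt}{d-\gamma' t}\le 2t$ for $t$ small enough. Thus $|S|\le|[A]|+2t = (w-t)+2t = w+t$. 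That still loses a little, so the honest route is probably to argue that the vertices of $T\setminus W_{d/2}$ introduced during the algorithm, together with $|W_{d/2}|$, already dominate $|S|$: each step of the first loop adds a full neighborhood $N(v)$ of size $d$ to $T$ (minus overlaps), which is much larger than the $O(t)$ excess of $S$ over $[A]$, while $|[A]|\le|W_{d/2}|+(\text{something small})$.

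Concretely, I would show $|[A]|\le|T|-|S\setminus[A]|$, i.e.\ $|S|\le|T|$, by exhibiting an injection from $S$ into $T$, or by the inequality $|[A]|\le |W_{d/2}|$ combined with $|W_{d/2}|+|S\setminus[A]|\le|T|$. Note $[A]=\{x : N(x)\subseteq W\}$ and each such $x$ sends all $d$ of its edges into $W$; a counting argument bounds $|[A]|$ against $|W_{d/2}|$ (vertices of $W$ of high degree into $[A]$) up to an error controlled by $t$, since $|E([A],W)|=d|[A]|$ and $|E([A],W\setminus W_{d/2})|\le (d/2)|W\setminus W_{d/2}|$, giving $d|[A]|\le d|W_{d/2}|+(d/2)(|W|-|W_{d/2}|)$, hence $|[A]|\le \tfrac12(|W_{d/2}|+|W|)=\tfrac12(|W_{d/2}|+w)$; combined with $|W|-|W_{d/2}|\le$ (a $t$-dependent bound) this closes the gap. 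The main obstacle I anticipate is exactly this last reconciliation of constants: the container degree parameter $\gamma' t$ and the excess sizes $O(t)$ must be traded off so that nothing worse than $|S|\le|T|$ survives, and I expect one needs to use that $T$ contains not just $W_{d/2}$ but also the $\Omega(d)$-sized neighborhoods added in the first while loop whenever $[A]$ is appreciably larger than $W_{d/2}$. Once that bookkeeping is done the inequality $|S|\le|T|$ follows, and this is what subsequently lets the reconstruction argument in Lemma~\ref{lem:reconstruction} treat $T$ (rather than $S$) as the ``large side'' when peeling off vertices.
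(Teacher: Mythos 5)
Your first double-counting attempt correctly identifies the right tools (the container degree bounds and an edge count) but picks the wrong quantity to count, and your subsequent attempts do not close the gap; they remain at the level of ``bookkeeping I expect to be able to do.'' So the proposal has a genuine gap, even though the raw ingredients (the bounds $|S\setminus[A]|=O(t)$, $|W\setminus T|=O(t)$, and the container degree conditions) are all in play.

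Concretely, counting $|E(S,T)|$ from both sides, as in your first pass, gives $(d-\gamma' t)|S|\le |E(S,T)|\le d|T|$ and hence $|S|\le |T| + \gamma' t|T|/(d-\gamma' t)$. The error term here carries a factor of $|T|$, which can be as large as $w\le n$, so the excess is $O(\gamma' t/\delta)$ and does not vanish; this is not merely a constant to fix up. The paper instead double counts $|E(S,W)|$ with $W=N(A)$: from the $T$ side, $|E(S,W)|\le d|T|+\gamma' t\,|W\setminus T|$ (the second term uses property~\ref{itm:containerYTdegrees} on $W\setminus T\subseteq Y\setminus T$), and from the $S$ side, $|E(S,W)|\ge d|[A]|+(d-\gamma' t)|S\setminus[A]| = d|S|-\gamma' t\,|S\setminus[A]|$ (using properties~\ref{itm:containermembership} and~\ref{itm:containerSdegrees} plus $T\subseteq W$). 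Subtracting yields
\[
|S|\le |T| + \frac{\gamma' t\bigl(|S\setminus[A]|+|W\setminus T|\bigr)}{d},
\]
where now the error is $O(\gamma' t^2/d)$, i.e.\ \emph{additive} in $t^2/d$ rather than multiplicative in $|T|$. With $t\le\log^4 n$ and $d=\lfloor\delta n\rfloor$ this is $o(1)$, and since $|S|$ and $|T|$ are integers the inequality $|S|\le|T|$ follows. That integrality step is the other piece you never invoke; you try to force $|S|\le|T|$ exactly from real-valued estimates, which is why your later attempts (bounding $|[A]|$ against $|W_{d/2}|$, speculating about an injection) keep losing an additive $O(t)$.
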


\begin{proof}
Let us denote $W = N(A)$.
First, we observe that $|E(S,W)| \leq d|T| + \gamma' t |W \setminus T|$ by~\ref{itm:containerYTdegrees}. We also have that $|E(S,W)| \geq d|[A]| + |S \setminus [A]|(d - \gamma' t) = d|S| - \gamma' t |S \setminus [A]|$ by~\ref{itm:containermembership} and~\ref{itm:containerSdegrees}. Combining these inequalities, we have
\begin{equation}\label{eqn:containersizes}
|S| \leq |T| + \frac{\gamma' t(|S \setminus [A]| + |W \setminus |T||)}{d}.
\end{equation}
Since $T \supseteq W_{d/2}$, we have that $|W \setminus T| \leq O(t)$ and
\begin{align*}
td & = |E(W,X\setminus [A])| \geq \sum_{v \in S \setminus [A]}d_{T}(v) \geq |S \setminus [A]|(d - \gamma' t)
\end{align*}
which gives $|S \setminus [A]| = O(t)$. So~\eqref{eqn:containersizes} implies
\[
|S| \leq |T| + O\left(\frac{\gamma' t^2}{d}\right).
\]
Since $t \leq \log^4 n$, $d=\lfloor\delta n\rfloor$, and $|S|$ and $|T|$ are both integers, we have that $|S| \leq |T|$.
\end{proof}

We finish the proof using the following lemma from~\cite{PARK22}, whose proof we reproduce for clarity.

\begin{lemma}[\cite{PARK22}, Lemma 11]\label{lem:PARK}
There is an absolute constant $\gamma''>0$ such that the following holds:

For any tuple $(S,T)\in 2^X \times 2^Y$ such that $|S| \leq |T|$, there are at most $2^{w - \gamma'' t}$ sets $A \in \mathcal{G}(w,t)$ such that $[A] \subseteq S$ and $T \subseteq N(A)$.
\end{lemma}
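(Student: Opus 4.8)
\textbf{Proof proposal for Lemma~\ref{lem:PARK}.}

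The plan is to prove this via the standard container-style reconstruction argument: we want to show that once the container $(S,T)$ is fixed, a set $A\in\mathcal{G}(w,t)$ with $[A]\subseteq S$ and $T\subseteq N(A)$ is determined by a small amount of extra information, so that the number of choices is at most $2^{w-\gamma''t}$. The natural quantity to reconstruct $A$ from is its neighborhood $W=N(A)$, since $[A]$ is then recovered as $[W]=\{x\in X: N(x)\subseteq W\}$ (using that $A$ is closed as a polymer in the relevant sense — more precisely $[A]$ is determined by $W$). Thus it suffices to bound the number of possible $W$ with $|W|=w$ and $T\subseteq W$, together with whatever additional data pins $W$ down inside $S,T$.

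First I would set up the ``fingerprint'': since $T\subseteq W\subseteq N(S)$ (the latter because $A\subseteq S$ implies $W=N(A)\subseteq N(S)$), the set $W\setminus T$ is a subset of $N(S)\setminus T$, and one shows $|W\setminus T|\le O(t)$ just as in the companion lemma (using $T\supseteq W_{d/2}$ and $A$ being $t$-expanding, so $|W|-|[A]|=t$ forces the low-degree part of $W$ to be small). Then the number of candidates for $W$ is at most $\binom{|N(S)\setminus T|}{O(t)}$. To make this at most $2^{w-\gamma''t}$ we need $|N(S)\setminus T|$ to be not too large relative to $w$; here the input hypothesis $|S|\le|T|$ is exactly what is used, since it controls the ``edge budget'' $|E(S,Y)|$ and hence how many vertices outside $T$ can have many neighbors in $S$. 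The second key step is a counting inequality of the shape $|N(S)\setminus T|\cdot(\text{something}) \lesssim w$, which combined with $\binom{a}{b}\le (ea/b)^b$ and a judicious choice of the absolute constant $\gamma''$ yields the bound. One must also account for the factor $2^{|S\setminus[A]|}$ or similar needed to actually pick out $A$ from $S$ once $W$ is known — but $[A]=[W]$ is forced, and $A$ only needs to satisfy $[A]\subseteq S$ and $N(A)=W$, and among polymers the relevant count is again $2^{O(t)}$ by the bound $|S\setminus[A]|=O(t)$ established above; all these $2^{O(t)}$ and $\binom{O(t)}{\cdot}$-type factors combine into $2^{w-\gamma''t}$ after absorbing into the constant.

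The step I expect to be the main obstacle is getting the arithmetic of the exponents to close cleanly: one has several competing $2^{O(t)}$-sized contributions (choices for $W\setminus T$, choices for $A$ given $W$, and the ``$2^{w}$-type'' term coming from the fact that there are roughly $2^{|N(S)|}\approx 2^{w}$ a priori subsets but we have saved a $\gamma''t$ in the exponent), and one needs $|S|\le|T|$ to be used in precisely the right place to convert a bound like ``$|N(S)|\le w+O(t)$'' — which by itself only gives $2^{w+O(t)}$, the wrong direction — into the genuine saving $2^{w-\gamma''t}$. The resolution is that $A$ is $t$-\emph{expanding}, not contracting, so $|W|=|[A]|+t$ and the entropy is really being measured against $|[A]|$, not $w$; feeding $|S|\le|T|$ into the edge-count $|E(S,W)|$ shows $S$ cannot be much larger than $[A]$, which is what converts the naive $2^{|N(S)|}$ into $2^{|W|-\Omega(t)}=2^{w-\Omega(t)}$. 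Since the paper explicitly cites this as Lemma 11 of~\cite{PARK22} and promises to reproduce the proof for clarity, I would follow that source's bookkeeping closely rather than re-derive the constants from scratch.
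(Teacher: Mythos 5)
The high-level idea---reconstruct $A$ from its neighborhood $W=N(A)$, use $[A]=[W]$, and exploit $|S|\le|T|$ to save a factor of $2^{\Omega(t)}$---is in the right spirit, but the concrete reconstruction scheme you propose has a genuine gap, and it differs from the paper's argument in a way that matters.

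Your plan is to enumerate $W\setminus T$ as a small subset of $N(S)\setminus T$, and you want to bound $|N(S)\setminus T|$ using the hypothesis $|S|\le|T|$ and ``the edge budget $|E(S,Y)|$.'' This does not go through. Writing $S=[A]\cup S'$ with $S'=S\setminus[A]$, one has $N(S)\setminus T\subseteq(W\setminus T)\cup N(S')$, and $|N(S')|$ can be as large as $d|S'|$---the hypothesis $|S|\le|T|$ gives no control of $N(S')$ by itself. (The degree condition that \emph{would} tame $N(S)\setminus T$, namely item~\ref{itm:containerYTdegrees} in the definition of a $\gamma'$-container, is not part of the hypotheses of Lemma~\ref{lem:PARK}.) You also invoke $T\supseteq W_{d/2}$ to bound $|W\setminus T|$, but that is likewise not assumed here; the bound $|W\setminus T|\le t$ that you actually need comes purely from $|T|\ge|S|\ge|[A]|=w-t$.

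The paper's proof takes a different route that avoids ever bounding $|N(S)\setminus T|$: it splits on whether $|S|< w-\alpha t$ (in which case $A\subseteq S$ already gives $\le 2^{w-\alpha t}$ choices outright), and otherwise fixes an \emph{anchor} set $A^*\in\mathcal{G}(w,t)$ for which $(S,T)$ is a $\gamma'$-container, with $W^*=N(A^*)$. The reconstruction then records $W^*\setminus W$ (a subset of $W^*\setminus T$, of size $\le\alpha t$ when $|S|\ge w-\alpha t$) and, for each $y\in W\setminus W^*$, a witnessing neighbor in $S\setminus[A^*]$ (a set of size $\le t$), giving $\binom{t}{\alpha t}\le 2^{H(\alpha)t}$ further choices. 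Once $W$ is pinned down, $[A]=[W]$ is determined and $A\subseteq[A]$ contributes $2^{w-t}$ choices, so the total is $2^{w-t+(\alpha+H(\alpha))t}$; choosing $\alpha=0.17$ gives $\gamma''=0.17$. The essential mechanism you are missing is this anchor/symmetric-difference comparison: it converts the problem of locating $W$ in the a priori huge set $N(S)$ into locating $W\triangle W^*$ in two small sets, which is what makes the binomial coefficients genuinely small.
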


To be precise, in~\cite{PARK22} the graph in question is the $d$-dimensional hypercube and additional hypotheses are stated, namely $w-t < n/4$ and $w > d^4$. These play no role in the proof, however, and it extends verbatim to the result stated above. 

\begin{proof}
Throughout, we denote $W = N(A)$, and let $\alpha > 0$ be a constant that will be determined later. 

If $|S| < w - \alpha t$, then $A$ is among the possible $2^{w - \alpha t}$ subsets of $S$. Suppose otherwise, that $|S| > w - \alpha t$. Let $A^* \in \mathcal{G}(w,t)$ such that $(S,T)$ is a $\gamma'$-container for $A^*$ and let $W^* = N(A^*)$. We have that $[A]$ is completely determined by $W \setminus W^*$ and $W^*\setminus W$. Since $W^*\setminus W \subseteq W^*\setminus T$, and 
\[
|W^* \setminus T| \leq |W^*| - |T| = |W| - |T| \leq |W| - |S|  \leq \alpha t,
\]
there are at most $2^{\alpha t}$ choices for $W^* \setminus W$. Next, for each vertex in $W \setminus W^*$, we choose a neighbor in $A \setminus A^* \subseteq S \setminus A^*$. Observe that $W \setminus W^* = N(A \setminus A^*) \setminus W^*$. Since 
\[
|W \setminus W^*| \leq |W \setminus F| = |W| - |F| \leq |W| - |S| \leq \alpha t,
\]
and 
\[
|S \setminus A^*| \leq |S| - |A^*| = |S| - |A| \leq |T| - |A| \leq |W| - |A| =  t.
\]
Therefore, the number of choices for $W \setminus W^*$ is at most
\[
\binom{t}{\alpha t} \leq 2^{H(\alpha) t}.
\]
Once we have $[A]$, there are at most $2^{w - t}$ possibilities for $A$. Thus the total number of choices is at most
\[
2^{w - t + t(\alpha + H(\alpha))}.
\]
Choosing e.g., $\alpha = 0.17$ allows one to choose $\gamma'' = 0.17$.
\end{proof}

\end{document}